\title{Packing and Covering a Polygon with Geodesic Disks}
\author{Ivo Vigan}
\thanks{Dept. of Computer Science, City University of New York, The Graduate Center, New York, NY, USA. {\tt ivigan@gc.cuny.edu}}
\thanks{Research supported by NSF grant 1017539} 
\theoremstyle{plain}
\newtheorem{theorem}{Theorem}
\newtheorem{lemma}[theorem]{Lemma}
\newtheorem{prop}[theorem]{Proposition}
\newtheorem{problem}[theorem]{Problem}
\newtheorem{definition}[theorem]{Definition}
\newtheorem{obs}[theorem]{Observation}
\begin{document}
\maketitle

\begin{abstract}
Given a polygon $P$, for two points $s$ and $t$ contained in the polygon, their \emph{geodesic distance} is the length of the shortest $st$-path within $P$. A \emph{geodesic disk} of radius $r$ centered at a point $v \in P$ is the set of points in $P$ whose geodesic distance to $v$ is at most $r$. We present a polynomial time $2$-approximation algorithm for finding a densest geodesic unit disk packing in $P$. Allowing arbitrary radii but constraining the number of disks to be $k$, we present a $4$-approximation algorithm for finding a packing in $P$ with $k$ geodesic disks whose minimum radius is maximized. We then turn our focus on $\emph{coverings}$ of $P$ and present a $2$-approximation algorithm for covering $P$ with $k$ geodesic disks whose maximal radius is minimized. Furthermore, we show that all these problems are $\mathsf{NP}$-hard in polygons with holes. Lastly, we present a polynomial time exact algorithm which covers a polygon with two geodesic disks of minimum maximal radius.
\end{abstract}

\section{Motivation and Related Work}

Packing and covering problems are among the most studied problems in discrete geometry (see \cite{agar},\cite{aste},\cite{boro},\\ \cite{erd},\cite{convex},\cite{grub},\cite{rog},\cite{zusch},\cite{toth1993packing},\cite{recentT},\cite{toth},\cite{zong} for books on these topics). Nevertheless, most of the literature focus on packings and coverings using Euclidean balls, which is a somewhat unrealistic assumption for practical problems. A prominent practical example is the \emph{Facility Location} (\emph{$k$-Center}) problem (see for example \cite{drezner1995facility}) in buildings or other constrained areas. In this a setting the relevant distance metric is the shortest path metric and not the Euclidean distance. Such a problem occurs when a mobile robot is navigating in a room such as a data center (see \cite{LenchnerICAC} and \cite{5980554} for such an example), which is naturally modeled as a polygon, and we are interested in placing charging stations in such a way that the worst case travel time of the robot to the closest station gets minimized \cite{ibmRob}.\\
The shortest path distance is also referred to as the \emph{geodesic distance} and, for two points $u$ and $v$ in a polygon $P$, it is denoted by $d(u,v)$ and defined as the length of the shortest path between $u$ and $v$ which stays inside $P$. Furthermore, we define a closed \emph{geodesic disk} $D$ of radius $r$ centered at a point $v \in P$, as the set of all points in $P$ whose geodesic distance to $v$ is at most $r$. The \emph{interior} of $D$, denoted by $int(D)$, contains all points in $P$ which are at distance less than $r$ from $v$. The \emph{boundary} of $D$, denoted by $\partial D$, contains all points of $P$ which are either exactly at a distance $r$ from $v$ or they are at distance at most $r$ from $v$ but contained in the polygon boundary $\partial P$ (see Figure \ref{geoDisk}). 

In this paper we would like to initiate the studies of packing and covering problems in polygons using geodesic disks.\\

For packing problems in polygons, several complexity theoretical results are known. In  \cite{packNard} it is shown that  packing unit squares into orthogonal polygons with holes is $\mathsf{NP}$-hard, while the complexity is still open for simple polygons \cite{OPP}. On the other hand, in \cite{iacono} it is shown that finding the maximum number of small polygons which can be packed into a simple polygon is $\mathsf{NP}$-hard. This result was improved in \cite{ningPack} where hardness is shown, even if the small polygons have constant size. In case where the number of disks to be packed is fixed, the problem is known as a \emph{Dispersion} or \emph{Obnoxious Facility Location} problem and its Euclidean versions have been studied in \cite{dispersion1980} and \cite{Erkut1989275} while other distance functions were considered in \cite{hosseini2009obnoxious}.

\begin{figure}
\center
\definecolor{uuuuuu}{rgb}{0.27,0.27,0.27}
\definecolor{qqqqff}{rgb}{0.9,0.9,.9}

\definecolor{dddddd}{rgb}{0.7,0.6,0.4}

\begin{tikzpicture}[line cap=round,line join=round,>=triangle 45,x=.7cm,y=.7cm]
\clip(1.5,-0.5) rectangle (6.4,6.1);
\fill[color=qqqqff,fill=qqqqff,fill opacity=1.0] (2.73,3.12) -- (1.9,2.66) -- (2.01,3.18) -- cycle;
\fill[color=qqqqff,fill=qqqqff,fill opacity=1.0] (4,3) -- (2.01,3.18) -- (2.27,4.4) -- cycle;
\fill[color=qqqqff,fill=qqqqff,fill opacity=1.0] (4,3) -- (4.03,2.47) -- (2.73,3.12) -- cycle;
\fill[color=qqqqff,fill=qqqqff,fill opacity=1.0] (4,3) -- (3.71,3.73) -- (3.03,5) -- cycle;
\fill[color=qqqqff,fill=qqqqff,fill opacity=1.0] (4,3) -- (4.46,0.83) -- (4.03,2.47) -- cycle;
\fill[color=qqqqff,fill=qqqqff,fill opacity=1.0] (4,3) -- (4.83,0.94) -- (4.98,3.82) -- cycle;
\fill[color=qqqqff,fill=qqqqff,fill opacity=1.0] (4,3) -- (5.16,4.89) -- (3.71,3.73) -- cycle;

\draw [shift={(4,3)},color=qqqqff,fill=qqqqff,fill opacity=1.0]  (0,0) --  plot[domain=4.92:5.1,variable=\t]({1*2.22*cos(\t r)+0*2.22*sin(\t r)},{0*2.22*cos(\t r)+1*2.22*sin(\t r)}) -- cycle ;
\draw [shift={(4,3)},color=qqqqff,fill=qqqqff,fill opacity=1.0]  (0,0) --  plot[domain=0.7:1.02,variable=\t]({1*2.22*cos(\t r)+0*2.22*sin(\t r)},{0*2.22*cos(\t r)+1*2.22*sin(\t r)}) -- cycle ;
\draw [shift={(4.98,3.82)},color=qqqqff,fill=qqqqff,fill opacity=1.0]  (0,0) --  plot[domain=-0.42:0.7,variable=\t]({1*0.94*cos(\t r)+0*0.94*sin(\t r)},{0*0.94*cos(\t r)+1*0.94*sin(\t r)}) -- cycle ;

\draw [shift={(4,3)},color=qqqqff,fill=qqqqff,fill opacity=1.0]  (0,0) --  plot[domain=2.02:2.46,variable=\t]({1*2.22*cos(\t r)+0*2.22*sin(\t r)},{0*2.22*cos(\t r)+1*2.22*sin(\t r)}) -- cycle ;

\draw [shift={(4.98,3.82)}] plot[domain=-0.42:0.7,variable=\t]({1*0.94*cos(\t r)+0*0.94*sin(\t r)},{0*0.94*cos(\t r)+1*0.94*sin(\t r)});
\draw [shift={(4,3)}] plot[domain=4.92:5.1,variable=\t]({1*2.22*cos(\t r)+0*2.22*sin(\t r)},{0*2.22*cos(\t r)+1*2.22*sin(\t r)});
\draw [shift={(4,3)}] plot[domain=2.02:2.46,variable=\t]({1*2.22*cos(\t r)+0*2.22*sin(\t r)},{0*2.22*cos(\t r)+1*2.22*sin(\t r)});
\draw [shift={(4,3)}] plot[domain=0.7:1.02,variable=\t]({1*2.22*cos(\t r)+0*2.22*sin(\t r)},{0*2.22*cos(\t r)+1*2.22*sin(\t r)});

\draw [shift={(2.73,3.12)},color=qqqqff,fill=qqqqff,fill opacity=1.0]  (0,0) --  plot[domain=3.65:4.79,variable=\t]({1*0.95*cos(\t r)+0*0.95*sin(\t r)},{0*0.95*cos(\t r)+1*0.95*sin(\t r)}) -- cycle ;

\draw [shift={(2.73,3.12)}] plot[domain=3.65:4.79,variable=\t]({1*0.95*cos(\t r)+0*0.95*sin(\t r)},{0*0.95*cos(\t r)+1*0.95*sin(\t r)});

\draw[dddddd] (2.88,0.17)-- (1.67,1.56);
\draw[dddddd] (1.67,1.56)-- (2.58,5.83);
\draw[dddddd] (2.58,5.83)-- (3.71,3.73);
\draw[dddddd] (3.71,3.73)-- (6.06,5.61);
\draw[dddddd] (6.06,5.61)-- (6.19,3.29);
\draw[dddddd] (6.19,3.29)-- (4.98,3.82);
\draw[dddddd] (4.98,3.82)-- (4.76,-0.34);
\draw[dddddd] (4.76,-0.34)-- (4.03,2.47);
\draw[dddddd] (4.03,2.47)-- (2.73,3.12);
\draw[dddddd] (2.73,3.12)-- (2.83,1.9);
\draw[dddddd] (2.83,1.9)-- (2.88,0.17);

\draw [color=qqqqff] (2.73,3.12)-- (1.9,2.66);
\draw [color=black] (1.9,2.66)-- (2.01,3.18);
\draw [color=qqqqff] (2.01,3.18)-- (2.73,3.12);
\draw [color=qqqqff] (4,3)-- (2.01,3.18);
\draw [color=black] (2.01,3.18)-- (2.27,4.4);
\draw [color=qqqqff] (2.27,4.4)-- (4,3);
\draw [color=qqqqff] (4,3)-- (4.03,2.47);
\draw [color=black] (4.03,2.47)-- (2.73,3.12);
\draw [color=qqqqff] (2.73,3.12)-- (4,3);
\draw [color=qqqqff] (4,3)-- (3.71,3.73);

\draw [color=qqqqff] (3.03,5)-- (4,3);
\draw [color=qqqqff] (4,3)-- (4.46,0.83);
\draw [color=black] (4.46,0.83)-- (4.03,2.47);
\draw [color=qqqqff] (4.03,2.47)-- (4,3);
\draw [color=qqqqff] (4,3)-- (4.83,0.94);
\draw [color=black] (4.83,0.94)-- (4.98,3.82);
\draw [color=qqqqff] (4.98,3.82)-- (4,3);
\draw [color=qqqqff] (4,3)-- (5.16,4.89);
\draw [color=black] (5.16,4.89)-- (3.71,3.73);
\draw [color=qqqqff] (3.71,3.73)-- (4,3);
\draw [color=black] (4.98,3.82) -- (5.84,3.44);
\draw [color=black](2.73,3.12)  -- (2.81,2.17) ;
\draw [color=black] (3.71,3.73)-- (3.03,5);

\begin{scriptsize}
\fill [color=black] (2.88,0.17) circle (1.5pt);
\fill [color=black] (1.67,1.56) circle (1.5pt);
\fill [color=black] (2.58,5.83) circle (1.5pt);
\fill [color=black] (3.71,3.73) circle (1.5pt);
\fill [color=black] (6.06,5.61) circle (1.5pt);
\fill [color=black] (6.19,3.29) circle (1.5pt);
\fill [color=black] (4.98,3.82) circle (1.5pt);
\fill [color=black] (4.76,-0.34) circle (1.5pt);
\fill [color=black] (4.03,2.47) circle (1.5pt);
\fill [color=black] (2.73,3.12) circle (1.5pt);
\fill [color=black] (4,3) circle (1.5pt);
\draw[color=black] (4.08,3.3) node {$v$};
\end{scriptsize}
\end{tikzpicture}
\caption{A polygon containing a geodesic disk centered at $v$, whose interior is depicted in gray and its boundary is drawn in black.}
\label{geoDisk}
\end{figure}
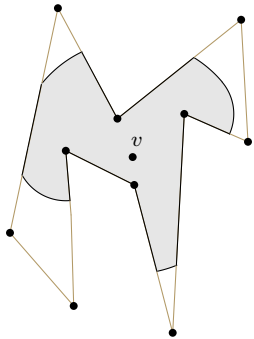

Covering problems have been studied in the context of the \emph{Metric $k$-Center Clustering} problem, where $n$ points in the plane are covered with $k$ metric disks of minimum maximal radius. In \cite{kcenterRef} it was shown to be $\mathsf{NP}$-hard and a $2$-approximation algorithm was presented, which is the best possible approximation ratio when allowing arbitrary metrics \cite{Vazirani}. For the Euclidean metric it is shown in \cite{2approxKc} to be inapproximable in polynomial time within a factor of $1.82$ unless $\mathsf{P} = \mathsf{NP}$. In the context of polygons, \cite{convexCOv} present an $(0.78-\epsilon)$-approximation algorithm for covering a convex polygon with $k$ Euclidean disks, under the restriction that the disks need to be fully contained within the polygon.


Exact coverings of $n$ points in the plane with two Euclidean disks of minimum maximal radius, commonly referred to as the \emph{2-Center} problem, has been heavily studied. The best deterministic algorithm runs in $O(n \log^9 n)$ time \cite{Sharir:1996:NAP:237218.237251} and in \cite{Eppstein1997} an expected $O(n \log^2 n)$ time algorithm is presented. For polygons, in \cite{Shin98computingtwo} a $O(n^2\log^3 n)$ time algorithm for covering a convex polygon with two Euclidean disks of minimum maximal radius is presented.

This paper is organized as follows. In Section \ref{udPack} we present a polynomial time $2$-approximation algorithm for packing geodesic unit disks into a simple polygon and show that the problem is $\mathsf{NP}$-hard in polygons with holes. In Section \ref{udCover} we show that covering a polygon with holes using geodesic unit disks is $\mathsf{NP}$-hard. In Section \ref{geodesicKcov} we allow arbitrary radii, present a $2$-approximation algorithm for covering a polygon (possibly with holes) with $k$ disks of minimum maximal radius and show that the problem is $\mathsf{NP}$-hard in polygons with holes. Analogously, in Section \ref{geoKpacking} we present a $4$-approximation algorithm for finding a packing in a polygon with $k$ geodesic disks whose minimum radius is maximized and show that the problem is $\mathsf{NP}$-hard in polygons with holes. Finally, in Section \ref{exact2cov} we present a polynomial time exact algorithm which covers a simple polygon with two geodesic disks of minimum maximal radius.

\section{Geodesic Unit Disk Packing}
\label{udPack}

In this section we study the following Geodesic Unit Disk Packing problem, present a $2$-approximation algorithm for simple polygons and show that the problem is $\mathsf{NP}$-hard in polygons with holes. 

\begin{problem}[Geodesic Unit Disk Packing]
Given a polygon, find a maximum cardinality packing with geodesic disks of radius $1$. 
\label{packproblem}
\end{problem}




\begin{algorithm}[htp]
\caption{greedyUnitPacking($P$)}
\label{alg1}

\begin{algorithmic}
\STATE $S \leftarrow \emptyset$ \COMMENT{The centers of the packing.}
\STATE $\partial \mathcal{A} \leftarrow \emptyset$  \COMMENT{The boundary of the geodesic disk arrangement.}
\STATE $V \leftarrow vertices(P)$  \COMMENT{The set of points at which disks	 can be centered; initially these are the vertices of $P$.}
\REPEAT
\STATE find $u,v \in V$ of maximum geodesic distance
\STATE  center a geodesic disk $D$ of radius $2$ at $v$
\STATE $S \leftarrow S \cup \{v\}$ \COMMENT{Add $v$ to the set of centers returned by the algorithm.}
\STATE $V \leftarrow V \setminus int(D)$ \COMMENT{Remove all points of $V$ contained in the interior of $D$.}
\STATE Update $V$ by including all intersection points of $\partial D$ with $\partial \mathcal{A}$ and with the unpacked portion of $\partial P$. 
\STATE $\partial \mathcal{A} \leftarrow \partial \mathcal{A} \cup \partial D$ \COMMENT{Update the boundary of the disk arrangement by including $\partial D$.}\\
\UNTIL{$V = \emptyset$}
\RETURN $S$
\end{algorithmic}
\end{algorithm}

The algorithm greedily centers geodesic disks of radius $2$ at points which are part of a maximum distance pair in the currently unpacked (i.e. free) region of the polygon. The reasoning behind placing radius $2$ and not unit disks is that a radius $2$ disk does not contain the center of any other radius $2$ disk if and only if the disks of radius $1$ with the same centers are disjoint, i.e. form a packing. The set of candidate points of maximal distance, denoted by $V$ in the algorithm, is initialized to consist of all the polygon vertices. In each iteration, the intersection points of the boundary  $\partial D$ of the newly placed disk with unpacked parts of the polygon boundary as well as the intersection points of the newly placed disk with the boundary $\partial \mathcal{A}$ of the disk arrangement  induced by the previously placed disks are added to $V$. Furthermore, all points of $V$ lying in the interior of the newly placed disk $D$ get removed from $V$.

\begin{prop}
The \emph{greedyUnitPacking} algorithm runs in time $O(K(n+K)\log^2(n+K))$, where $n$ is the number of vertices of the polygon and $K$ is the size of the output. 
\end{prop}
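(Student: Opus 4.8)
The plan is to charge the total running time to the $K$ iterations of the main loop and to bound the cost of a single iteration by $O((n+K)\log^2(n+K))$. First I would observe that the loop body adds exactly one center to $S$ per pass and that $V$ becomes empty precisely when no free candidate point remains, so the number of iterations equals the output size $K$. Before entering the loop I would preprocess $P$ so that the geodesic distance $d(u,v)$ between any two query points can be evaluated quickly: triangulating $P$ and building a two-point shortest-path-length data structure (e.g.\ that of Guibas and Hershberger) takes $O(n\log n)$ time and answers each query in $O(\log n)$ time after an $O(\log n)$ point location. The same shortest-path-tree machinery lets me construct the geodesic disk $D$ of radius $2$ centered at any $v$, whose boundary $\partial D$ consists of $O(n)$ circular arcs, in $O(n\log n)$ time.

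The crux of a single iteration is then (i) finding the maximum-distance pair in $V$, (ii) building $D$, (iii) intersecting $\partial D$ with $\partial\mathcal{A}$ and with the unpacked portion of $\partial P$, and (iv) deleting the points of $V$ that fall in $int(D)$. Steps (ii)--(iv) are dominated by sweeping the $O(n)$ arcs of $\partial D$ against the current arrangement boundary and by testing each point of $V$ for membership in $D$, each individual test being an $O(\log n)$ geodesic query. For step (i) I would avoid the naive $\Theta(|V|^2)$ comparison and instead locate, for each candidate in $V$, its geodesic farthest neighbor and take the overall maximum; organizing this search so that each of the $O(n+K)$ candidates is handled with $O(\log(n+K))$ geodesic evaluations of $O(\log n)$ cost each yields the $O((n+K)\log^2(n+K))$ term.

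Everything then hinges on two points, and these are where I would spend most effort. The first is a combinatorial bound showing that $|V|$, and more generally the complexity of the arrangement boundary $\partial\mathcal{A}$ maintained by the algorithm, stays in $O(n+K)$ throughout; this is needed both to bound the number of farthest-pair candidates and to bound the number of new intersection points created when a disk is inserted. I would establish it by exploiting that the chosen centers are pairwise at geodesic distance at least $2$, which limits how the radius-$2$ disk boundaries can cross and keeps the relevant free-region boundary linear in $n+K$ rather than quadratic. The second, and the part I expect to be the real obstacle, is realizing the geodesic farthest-pair step within the claimed bound: a fully dynamic geodesic farthest-point structure is delicate, so I would either recompute the relevant farthest neighbors each round from the shortest-path data structure or amortize the maintenance across iterations, in each case verifying that the per-round cost does not exceed $O((n+K)\log^2(n+K))$. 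Multiplying this per-iteration bound by the $K$ iterations then gives the stated $O\bigl(K(n+K)\log^2(n+K)\bigr)$ running time.
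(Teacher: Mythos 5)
Your proposal has the right outer structure (charge $O((n+K)\log^2(n+K))$ to each of the $K$ iterations), but it defers exactly the two steps that carry the proof, and the one concrete mechanism you do offer for the harder of them would not work. For the maximum-distance-pair step you propose that ``each of the $O(n+K)$ candidates is handled with $O(\log(n+K))$ geodesic evaluations,'' but no justification is given and none is available: there is no way to find the geodesic farthest neighbor of a point within an unstructured set of $m$ points using only $O(\log m)$ distance queries --- some additional structure (e.g.\ a farthest-point geodesic Voronoi diagram or the machinery of a dedicated algorithm) is required, and building and maintaining such structure is precisely the difficulty you yourself flag as ``the real obstacle'' without resolving it. The paper does not reconstruct this from two-point shortest-path queries at all; it simply invokes Toussaint's algorithm for computing a maximum geodesic distance pair among a set of points in a simple polygon, which runs in $O((n+K)\log(n+K))$ time on a candidate set of size $O(n+K)$, and that single citation closes the step you leave open.

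The second deferred step is the combinatorial bound $|V|=O(n+K)$ together with the cost of maintaining the disk-arrangement boundary $\partial\mathcal{A}$. Your idea of exploiting that the chosen centers are pairwise at geodesic distance at least $2$ is plausible but never carried out, so the claimed per-iteration bound is not actually established. Here too the paper proceeds by citation: the boundary of a single geodesic disk is computed in $O(n)$ time (Guibas et al.), and the boundary of an arrangement of $K$ geodesic disks is computed in $O((n+K)\log^2(n+K))$ time by a known algorithm --- which is, incidentally, where the $\log^2$ factor in the statement comes from, not from nesting two logarithmic searches as your accounting suggests. In short, your write-up is a plan whose two load-bearing claims are left as ``to be verified,'' and the proposed realization of the farthest-pair search is incorrect as stated; the paper's proof is short precisely because both claims are discharged by existing algorithms with known running times.
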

\begin{proof}
Using the algorithm introduced in \cite{Guibas} (see also \cite{geodPac}), the boundary of a geodesic disk can be computed in time $O(n)$. Furthermore, using the algorithm of \cite{diskBoundShar}, one can compute the boundary of an arrangement of $K$ geodesic disks in $O((n + K)\log^2(n+K))$ time. It is easy to see that the cardinality of $V$ is $O(n + K)$ at any step, thus finding the next center $v$ among the set $V$ of points, i.e. finding a point of a maximum distance pair in the free space, can be computed in $O((n + K)\log(n+K))$ time using the algorithm of \cite{Toussaint89computinggeodesic}. Since the main loop runs $K$ times, the proposition follows.
\end{proof}

\begin{theorem}
The greedyUnitPacking algorithm yields a $2$-approximation for the Geodesic Unit Disk Packing problem.  
\end{theorem}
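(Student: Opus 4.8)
The plan is to establish two things: that the returned set $S$ is a feasible unit-disk packing, and that $\mathrm{OPT} \le 2|S|$, where $\mathrm{OPT}$ is the size of a maximum packing (so that $|S| \ge \tfrac{1}{2}\,\mathrm{OPT}$, as required for a maximization problem). The guiding idea is the radius-doubling observation already noted: working with radius-$2$ disks turns the disjointness requirement on the radius-$1$ packing into a clean separation condition on the centers, namely that any two centers are at geodesic distance at least $2$.

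First I would prove feasibility by maintaining the invariant that, at the start of every iteration, each point of $V$ is at geodesic distance at least $2$ from every center already in $S$. The base case is immediate since $S=\emptyset$. For the inductive step, the points of $V$ surviving the removal of $int(D)$ are exactly those at distance at least $2$ from the new center $v$, and they were at distance at least $2$ from the earlier centers by hypothesis; the newly inserted points lie on the distance-$2$ arc of $\partial D$ (hence at distance exactly $2$ from $v$) and on $\partial\mathcal{A}$ or on the unpacked part of $\partial P$, so they lie outside every previously placed open disk and are therefore at distance at least $2$ from all earlier centers as well. Consequently every center chosen from $V$ is at distance at least $2$ from all previously chosen centers, the centers of $S$ are pairwise $2$-separated, and by the radius-$2$ equivalence the radius-$1$ disks centered at $S$ form a packing.

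The second ingredient is a covering statement: when the loop terminates with $V=\emptyset$, the radius-$2$ disks centered at $S$ cover all of $P$. I would argue the contrapositive. If some point of $P$ were uncovered, take the connected component $F$ of the uncovered region containing it. The boundary of $F$ is a union of polygon edges and radius-$2$ arcs, and a planar-topology argument should show it must expose a combinatorial feature that the algorithm tracks: either a vertex of $P$ or an intersection of a placed disk boundary with $\partial P$ or with $\partial\mathcal{A}$. Any such feature is a point of $V$, since uncovered vertices are never deleted, and each tracked intersection point is inserted when created and, lying on $\partial\mathcal{A}$ or on the unpacked boundary, never enters the open interior of a later disk and so is never deleted; this contradicts $V=\emptyset$. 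The step needing care here is verifying that $\partial F$ always exposes such a tracked point rather than being a smooth sub-arc of a single disk with no recorded breakpoint.

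Finally, combining covering with feasibility, every center $c$ of a fixed optimal packing lies in some radius-$2$ disk $D_2(v)$ with $v\in S$, so I would charge $c$ to such a $v$ and aim to prove $\mathrm{OPT}\le 2|S|$; this counting bound is the main obstacle. A naive per-disk argument does not suffice, because a single radius-$2$ geodesic disk can contain three optimal centers that are pairwise at distance at least $2$ (an equilateral configuration of side $2$ already does this in a locally Euclidean region), so no bound of "two optimal centers per greedy disk" holds in isolation. The factor of $2$ must instead come from an amortized charging that exploits the farthest-pair selection rule together with the $2$-separation of $S$ and the bounded, planar nature of the geodesic metric in a simple polygon: because each iteration extracts a globally farthest surviving pair, the disks are committed in a diameter-first order that I expect to prevent the worst-case local overlap from recurring across iterations, yielding the amortized bound of two even though the per-disk bound fails. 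Making this amortization precise is where I anticipate the real work lies.
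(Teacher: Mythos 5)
Your proposal reproduces the paper's skeleton --- feasibility via the $2$-separation invariant, then coverage of $P$ by the radius-$2$ disks at termination (a claim the paper itself merely asserts) --- but it stops exactly at the step that constitutes the theorem: the bound $|OPT|\le 2K$ is never established, only predicted to follow from an unspecified ``amortized charging.'' That is a genuine gap, and it is precisely the step the paper fills with the per-disk bound you declare impossible. The discrepancy comes from the two qualifications your objection drops: the $i$-th greedy disk $D_i$ is centered at an endpoint $p_i$ of a farthest pair of the \emph{current free region} $free_i(P)$, and only optimal centers in $free_i(P)\cap int(D_i)$ are charged to it. The paper's Lemma \ref{distLemma} claims that for such a diametral endpoint $u$ of a set $P'$, three points of $P'$ that are pairwise at distance $\ge 2$ cannot all lie at distance $<2$ from $u$; granting this, the sets $free_i(P)\cap int(D_i)$ partition $OPT$ into at most $K$ blocks of size at most $2$, and $|OPT|\le 2K$ follows with no amortization at all. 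Note that your equilateral triangle, as stated, does not refute this refined claim: if its circumcenter is the greedy center, its vertices span a $240^{\circ}$ angular neighborhood of that center, which is unavailable when the center is a convex vertex of $P$ (as in the first iteration); an objection to the paper's argument has to engage the diametral selection rule, which yours does not.

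That said, your distrust of the per-disk bound turns out to be well founded, because Lemma \ref{distLemma} is false as stated. Let $P=P'$ be the convex kite with vertices $v=(-20,0)$, $x=(-0.1,1.9)$, $u=(0,0)$, $y=(-0.1,-1.9)$, so that $(u,v)$ is the unique farthest pair ($d(u,v)=20$, while $d(v,x)=d(v,y)\approx 19.99$), and take $z=(-1.99,0)$ on the segment $uv\subset P$. All geodesic distances here are Euclidean, and $d(x,y)=3.8$, $d(x,z)=d(y,z)\approx 2.68$, yet $d(u,x)=d(u,y)\approx 1.90$ and $d(u,z)=1.99$: the minimum pairwise distance among $x,y,z$ exceeds the maximum of their distances to the diametral endpoint $u$, contradicting the lemma. (The lemma's proof tacitly assumes $u,v$ and two of the three points are in convex position so that Observation \ref{quadIneq} applies; here no such labeling exists, since $z$ lies on $p(u,v)$.) Geometrically, a diametral vertex whose interior angle is close to $180^{\circ}$ has a half-disk of radius $2$ beside it, and a half-disk of radius $2$ does contain three points pairwise more than $2$ apart. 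So the step you balked at is also the step at which the paper's own argument breaks: a complete proof requires either a corrected diametral lemma (perhaps with a weaker constant) or the genuinely amortized analysis you postulate, and neither your proposal nor, as written, the paper supplies it.
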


\begin{proof}
Since at each step $greedyUnitPacking$ centers the newly placed disk at a point in $V$ and all these points are all at least at distance $2$ from any point in $S$, the computed centers indeed induce a geodesic unit disk packing. In order to see that $S$ has at least half the cardinality of an optimal packing $OPT$, let $(p_1, \ldots, p_K)$ be the sequence of the points $S$ as placed by $greedyUnitPacking$. For $1 \leq j \leq K$ let $D_j$ be the geodesic radius $2$ disk centered at $p_j$. Letting $free_i(P) = P \setminus  \bigcup^{i-1}_{j=1} int(D_j)$ denote the free (i.e. unpacked) regions of $P$ at the beginning of the $i$-th iteration, the algorithm selects a point $p_i$ of $V$ which is a diametral point in $free_i(P)$. Thus Lemma \ref{distLemma} implies that the part of $D_i$ which is contained in $free_i(P)$ contains at most two centers of any optimal solution, i.e. $|free_i(P) \cap int(D_i) \cap OPT| \leq 2$. Since $\bigcup^{K}_{j=1} int(D_j) \cap free_j(P)$ covers all of $P$, it holds for each $c \in OPT$ that there is a disk $D_i$, such that $c \in free_i(P) \cap int(D_i)$. Thus $\{int(D_1) \cap free_1(P), \ldots, int(D_K) \cap free_K(P) \}$ \emph{partitions} $OPT$ into $\leq K$ blocks each of size $\leq 2$ implying $|OPT| \leq 2K$. 
\end{proof}

\begin{figure}
\center
\begin{tikzpicture}[line cap=round,line join=round,>=triangle 45,x=.45cm,y=.45cm]
\clip(0,-6) rectangle (8.3,5.9);
\draw [color=black] (0.6,2.04)-- (2.34,2.8);
\draw [color=black] (2.34,2.8)-- (3.26,4.02);
\draw [color=black] (3.26,4.02)-- (3.78,5.12);
\draw [color=black] (3.78,5.12)-- (4.3,3.82);
\draw [color=black] (4.3,3.82)-- (5.74,2.42);
\draw [color=black] (5.74,2.42)-- (7.86,1.66);
\draw [color=black] (7.86,1.66)-- (6.88,1.32);
\draw [color=black] (6.88,1.32)-- (6.2,0.9);
\draw [color=black] (6.2,0.9)-- (6,0);
\draw [color=black] (6,0)-- (6.32,-1);
\draw [color=black] (6.32,-1)-- (4.9,-2.32);
\draw [color=black] (4.9,-2.32)-- (4.28,-3.2);
\draw [color=black] (4.28,-3.2)-- (4.16,-5.34);
\draw [color=black] (4.16,-5.34)-- (3.82,-1.48);
\draw [color=black] (3.82,-1.48)-- (3.26,-0.02);
\draw [color=black] (3.26,-0.02)-- (2.4,1.04);
\draw [color=black] (2.4,1.04)-- (0.6,2.04);

\draw [dotted ] (3.78,5.12) -- (4.16,-5.34);
\draw [dashed ] (6.2,0.9)-- (4.28,-3.2);
\draw [dotted ] (0.6,2.04) -- (7.86,1.66) ;

\begin{scriptsize}
\fill [color=black] (0.6,2.04) circle (1.5pt);
\draw[color=black] (0.7,2.4) node {$v$};
\fill [color=black] (2.34,2.8) circle (1.5pt);
\fill [color=black] (3.26,4.02) circle (1.5pt);
\fill [color=black] (3.78,5.12) circle (1.5pt);
\draw[color=black] (3.96,5.54) node {$y$};
\fill [color=black] (4.3,3.82) circle (1.5pt);
\fill [color=black] (5.74,2.42) circle (1.5pt);
\fill [color=black] (7.86,1.66) circle (1.5pt);
\draw[color=black] (8.05,2.1) node {$x$};
\fill [color=black] (6.88,1.32) circle (1.5pt);
\fill [color=black] (6.2,0.9) circle (1.5pt);
\fill [color=black] (6,0) circle (1.5pt);
\fill [color=black] (6.32,-1) circle (1.5pt);
\draw[color=black] (6.5,-0.68) node {$z$};
\fill [color=black] (4.9,-2.32) circle (1.5pt);
\fill [color=black] (4.28,-3.2) circle (1.5pt);
\fill [color=black] (4.16,-5.34) circle (1.5pt);
\draw[color=black] (4.5,-5.04) node {$u$};
\fill [color=black] (3.82,-1.48) circle (1.5pt);
\fill [color=black] (3.26,-0.02) circle (1.5pt);
\fill [color=black] (2.4,1.04) circle (1.5pt);
\end{scriptsize}
\end{tikzpicture}
\caption{An illustration for the proof of Lemma \ref{distLemma} of a polygon containing the pseudo quadrilateral on points $x,y,v,u$ whose diagonals are drawn as dotted lines.}
\label{geoIntPic}
\end{figure}
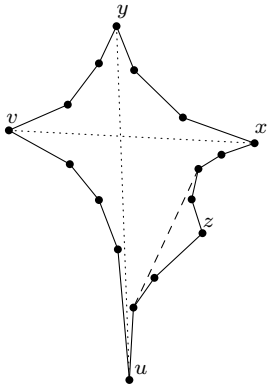

\begin{definition}
A \emph{pseudo triangle} is a polygon consisting of three convex vertices which are connected to each other by concave chains. A \emph{pseudo quadrilateral} is defined analogously on four convex vertices.\\
\end{definition}

\begin{lemma}
Let $P'$ be an arbitrary (not necessarily connected) subset of a polygon $P$, let $v$ be an arbitrary point in $P'$ and let $u \in P'$ be a point with maximum geodesic distance to $v$ (w.r.t. $P$) among all points in $P'$. It then holds for any  $x,y,z \in P'$ that $\max\{d(u,x),d(u,y), d(u,z)\} \geq \min\{d(x,y), d(x,z), d(y,z)\}$, with $d$ denoting the geodesic distance in $P$.
\label{distLemma}
\end{lemma}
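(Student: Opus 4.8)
The plan is to reduce the inequality to a single application of a crossing inequality for geodesics, set up on a pseudo-quadrilateral whose four corners are $v$, $u$, and two of the three points $x,y,z$.

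First I would record the tool I expect to drive the whole argument: a crossing inequality for shortest paths. Suppose the shortest path between two points $a,c$ and the shortest path between two points $b,d$ share a common point $w$. Writing $d(a,c)=d(a,w)+d(w,c)$ and $d(b,d)=d(b,w)+d(w,d)$ and applying the triangle inequality to the pairs $(a,w,d)$ and $(b,w,c)$ gives $d(a,c)+d(b,d)\ge d(a,d)+d(b,c)$, and regrouping via $(a,w,b)$ and $(c,w,d)$ gives $d(a,c)+d(b,d)\ge d(a,b)+d(c,d)$. In words, along two crossing geodesics the sum of the two ``diagonals'' dominates the sum of either pair of ``opposite sides''. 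The reason a common point $w$ actually exists in our situation is geometric: four points of a simple polygon, joined by their pairwise geodesics, bound a pseudo-quadrilateral in the sense of the preceding Definition (geodesics are reflex chains), so its two diagonal geodesics are forced to cross.

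Second, the heart of the reduction. Relabelling so that the pair I want is $x,y$, I would look at the pseudo-quadrilateral with corners $v,y,x,u$ (exactly the configuration of Figure \ref{geoIntPic}), whose diagonals are the geodesic from $v$ to $x$ and the geodesic from $y$ to $u$. Since these diagonals cross, the crossing inequality gives $d(v,x)+d(y,u)\ge d(v,u)+d(x,y)$, that is, $d(x,y)\le d(v,x)+d(y,u)-d(v,u)$. Now I invoke the one hypothesis not yet used: $u$ is a farthest point of $P'$ from $v$, so $d(v,x)\le d(v,u)$, and the displayed bound collapses to $d(x,y)\le d(y,u)$. Since $d(y,u)\le\max\{d(u,x),d(u,y),d(u,z)\}$ while $d(x,y)\ge\min\{d(x,y),d(x,z),d(y,z)\}$, chaining these inequalities yields precisely $\min\{d(x,y),d(x,z),d(y,z)\}\le\max\{d(u,x),d(u,y),d(u,z)\}$, which is the claim.

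Third, and this is the step I expect to be the genuine obstacle, I must guarantee that among the three points $x,y,z$ there is a pair, taken in the correct orientation, for which the pseudo-quadrilateral $v,\cdot,\cdot,u$ exhibits the crossing pattern used above (the diagonal from $v$ running to one of the pair and the diagonal from $u$ to the other), as opposed to the pattern in which $u$ separates the pair, which only produces a useless lower bound on the pairwise distance. I would settle this combinatorially by examining the cyclic order in which the geodesics from $v$ to $x,y,z,u$ leave $v$, equivalently the order of these leaves in the shortest-path tree rooted at $v$: the direction toward the extreme point $u$ singles out a slot in this order, and I would argue that two of the three test points must then be positioned so that the required crossing occurs, with the extremality of $u$ ruling out the degenerate orderings. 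The remaining care is with degeneracies, namely geodesics sharing a subpath instead of crossing transversally, or a test point lying on the geodesic from $v$ to $u$; these I would absorb by treating a shared subpath as a crossing at its last common point and by a short limiting argument. Pinning down this selection rule, and verifying it is forced rather than merely plausible, is where I expect the real work to lie.
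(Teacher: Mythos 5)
Your steps (1) and (2) are exactly the paper's argument, just run forwards instead of by contradiction: your crossing inequality is the paper's Observation \ref{quadIneq} (proved there the same way, by splitting the two diagonals at their intersection point and applying the triangle inequality), and your chain $d(x,y)\le d(v,x)+d(y,u)-d(v,u)\le d(y,u)$ is precisely the computation the paper uses to reach its contradiction. The genuine gap is the one you flagged yourself: step (3), exhibiting a pair among $x,y,z$ for which the quadrilateral has the right combinatorics, i.e. for which $p(u,v)$ and the geodesic joining the pair are \emph{opposite sides} rather than diagonals. This cannot be waved away --- with the wrong pairing the crossing inequality points the wrong way --- and the mechanism you sketch for it is misdirected on both counts: the extremality of $u$ plays no role in the selection (it is used only afterwards, in $d(v,x)\le d(v,u)$; the selection must work for arbitrary $u,v$), and the cyclic order of the geodesics leaving $v$ does not by itself determine which of the six geodesics among the four points cross, since shortest paths in a polygon can wind globally.

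The paper closes this gap with a single convexity observation that your outline is missing. The three geodesics among $x,y,z$ bound a pseudo triangle, which is the geodesic convex hull of $\{x,y,z\}$; a shortest path meets a geodesically convex set in a connected subpath, so $p(u,v)$ crosses the boundary of this hull at most twice and therefore misses at least one of its three sides --- relabel so that the missed side is $p(x,y)$. Now $p(u,v)$ and $p(x,y)$ do not cross, so they cannot be the two diagonals of the pseudo quadrilateral spanned by $\{u,v,x,y\}$ (diagonals always cross); hence $u$ and $v$ are adjacent corners, the diagonals are either $p(v,x),p(u,y)$ or $p(v,y),p(u,x)$, and both alternatives yield your bound $d(x,y)\le\max\{d(u,x),d(u,y)\}$ after possibly swapping the names of $x$ and $y$. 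So the orientation analysis you anticipated as ``the real work'' evaporates: once the pair is chosen by the convexity argument, either orientation serves. With that paragraph supplied (and a limiting argument for the degenerate cases where the hull collapses, which the paper also elides), your proof coincides with the paper's.
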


\begin{proof}
For contradiction suppose that the claim in the lemma is false, i.e. let $x,y,z$ be three points with \\$\min\{d(x,y), d(x,z), d(y,z)\} > \max\{d(u,x),d(u,y), d(u,z)\}$. For two points $a, b \in P$, let $p(a,b)$ denote the shortest path in $P$ connecting them. It is easy to see that the path $p(u,v)$ connecting $u$ and $v$ intersects at most two of the three shortest paths among $x,y,z$, since they form a pseudo triangle. W.l.o.g. let $p(x,y)$ be a path not intersected by $p(u,v)$ and w.l.o.g. let $p(u,x)$ and $p(v,y)$ be the other two paths defining a pseudo quadrilateral together with $p(u,v)$ and $p(x,y)$. It then follows from Observation \ref{quadIneq} that $d(u,v) + d(x,y) \leq d(u,y) + d(v,x)$. On the other hand, since $d(v,x) \leq d(u,v)$ and $d(u,y) < d(x,y)$ by assumption, $d(u,v) + d(x,y) > d(u,y) + d(v,x)$, a contradiction. 
\end{proof}



\begin{obs}
In a pseudo quadrilateral, the sum of the lengths of the two diagonals is at least as large as the sum of the lengths of the two opposite sides.
\label{quadIneq}
\end{obs}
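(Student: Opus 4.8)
The plan is to mimic the classical Euclidean argument that in a convex quadrilateral the diagonals cross, and that applying the triangle inequality at the crossing point yields the bound. Denote the four convex vertices of the pseudo quadrilateral by $A,B,C,D$ in cyclic order, so that its sides are the concave chains realized by the shortest paths $p(A,B),p(B,C),p(C,D),p(D,A)$ and its two diagonals are the shortest paths $p(A,C)$ and $p(B,D)$. The goal is to establish $d(A,C)+d(B,D) \geq d(A,B)+d(C,D)$; the argument for the other pair of opposite sides is symmetric, and the specialization $A=u$, $B=v$, $C=y$, $D=x$ recovers exactly the inequality $d(u,v)+d(x,y) \leq d(u,y)+d(v,x)$ invoked in the proof of Lemma \ref{distLemma}.

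First I would argue that the two diagonals must meet. The boundary of the pseudo quadrilateral is a simple closed curve, and the diagonal $p(A,C)$ is a simple arc joining two of its corners while staying inside the region; hence, by the Jordan curve theorem, it splits the region into two parts, one whose boundary contains the chain through $B$ and the other whose boundary contains the chain through $D$. Consequently $B$ and $D$ lie in different components, so any path between them that remains inside the region must cross $p(A,C)$; in particular the diagonal $p(B,D)$ does. Let $O$ be a point common to both diagonals.

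Next I would exploit that $O$ lies on each shortest path. Since every subpath of a shortest path is itself a shortest path, we have $d(A,C)=d(A,O)+d(O,C)$ and $d(B,D)=d(B,O)+d(O,D)$. Regrouping the four summands and applying the geodesic triangle inequality twice, namely $d(A,O)+d(O,B) \geq d(A,B)$ and $d(C,O)+d(O,D) \geq d(C,D)$, yields $d(A,C)+d(B,D) = \bigl(d(A,O)+d(O,B)\bigr)+\bigl(d(C,O)+d(O,D)\bigr) \geq d(A,B)+d(C,D)$, as required.

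The main obstacle is the crossing claim: one must be sure that the geodesic diagonals stay inside the pseudo quadrilateral and that $p(A,C)$ genuinely separates corner $B$ from corner $D$. This rests on the pseudo quadrilateral being simply connected with its four geodesic sides bounding it, so that the Jordan-curve separation applies cleanly; the subsequent metric steps are then routine. A minor technicality is that shortest paths in a polygon need not be unique, so the diagonals might overlap along a subpath rather than cross transversally, but since the argument uses only the existence of a common point $O$ together with the additivity of distance along a shortest path, any such point suffices.
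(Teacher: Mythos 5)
Your proof is correct and follows essentially the same route as the paper's: take a point $O$ common to the two geodesic diagonals, use additivity of distance along shortest paths, and apply the (pseudo-triangle) triangle inequality twice before summing. The only difference is that you explicitly justify the existence of the crossing point via a Jordan-curve separation argument, a step the paper takes for granted by appeal to its figure, so your write-up is if anything slightly more complete.
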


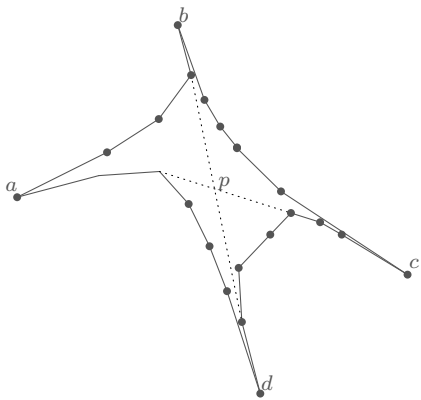
\begin{figure}[htbp]
\center
\definecolor{black}{rgb}{0.27,0.27,0.27}
\definecolor{black}{rgb}{0.33,0.33,0.33}
\begin{tikzpicture}[line cap=round,line join=round,>=triangle 45,x=.250cm,y=.25cm]
\clip(2.84,-10.26) rectangle (25.19,10.89);
\draw [color=black] (3.5,0.67)-- (8.28,3.06);
\draw [color=black] (8.28,3.06)-- (11.03,4.83);
\draw [color=black] (11.03,4.83)-- (12.75,7.17);
\draw [color=black] (12.75,7.17)-- (12.04,9.83);
\draw [color=black] (12.04,9.83)-- (13.46,5.85);
\draw [color=black] (13.46,5.85)-- (14.3,4.43);
\draw [color=black] (14.3,4.43)-- (15.19,3.32);
\draw [color=black] (15.19,3.32)-- (15.19,3.28);
\draw [color=black] (15.19,3.28)-- (17.53,0.98);
\draw [color=black] (17.53,0.98)-- (24.26,-3.45);
\draw [color=black] (24.26,-3.45)-- (20.76,-1.32);
\draw [color=black] (20.76,-1.32)-- (19.61,-0.66);
\draw [color=black] (19.61,-0.66)-- (18.06,-0.17);
\draw [color=black] (18.06,-0.17)-- (16.96,-1.32);
\draw [color=black] (16.96,-1.32)-- (15.28,-3.09);
\draw [color=black] (15.28,-3.09)-- (15.45,-5.97);
\draw [color=black] (15.45,-5.97)-- (16.43,-9.78);
\draw [color=black] (16.43,-9.78)-- (14.66,-4.33);
\draw [color=black] (14.66,-4.33)-- (13.73,-1.94);
\draw [color=black] (13.73,-1.94)-- (12.62,0.31);
\draw [color=black] (12.62,0.31)-- (11.07,2.04);
\draw [color=black] (11.07,2.04)-- (7.84,1.82);
\draw [color=black] (7.84,1.82)-- (3.5,0.67);
\draw [dotted] (12.75,7.17)-- (15.45,-5.97);
\draw [dotted] (11.07,2.04)-- (18.06,-0.17);
\begin{scriptsize}
\fill [color=black] (3.5,0.67) circle (1.5pt);
\draw[color=black] (3.2,1.24) node {$a$};
\fill [color=black] (8.28,3.06) circle (1.5pt);
\fill [color=black] (11.03,4.83) circle (1.5pt);
\fill [color=black] (12.75,7.17) circle (1.5pt);
\fill [color=black] (12.04,9.83) circle (1.5pt);
\draw[color=black] (12.35,10.41) node {$b$};
\fill [color=black] (13.46,5.85) circle (1.5pt);
\fill [color=black] (14.3,4.43) circle (1.5pt);
\fill [color=black] (15.19,3.32) circle (1.5pt);
\fill [color=black] (15.19,3.28) circle (1.5pt);
\fill [color=black] (17.53,0.98) circle (1.5pt);
\fill [color=black] (24.26,-3.45) circle (1.5pt);
\draw[color=black] (24.61,-2.87) node {$c$};
\fill [color=black] (20.76,-1.32) circle (1.5pt);
\fill [color=black] (19.61,-0.66) circle (1.5pt);
\fill [color=black] (18.06,-0.17) circle (1.5pt);
\fill [color=black] (16.96,-1.32) circle (1.5pt);
\fill [color=black] (15.28,-3.09) circle (1.5pt);
\fill [color=black] (15.45,-5.97) circle (1.5pt);
\fill [color=black] (16.43,-9.78) circle (1.5pt);
\draw[color=black] (16.78,-9.2) node {$d$};
\fill [color=black] (14.66,-4.33) circle (1.5pt);
\fill [color=black] (13.73,-1.94) circle (1.5pt);
\fill [color=black] (12.62,0.31) circle (1.5pt);
\draw[color=black] (14.5,1.4) node {$p$};
\end{scriptsize}
\end{tikzpicture}

\caption{Illustration of the proof of Observation \ref{quadIneq}.}
\label{figproof}
\end{figure}
 
\begin{proof}
Let $a,b,c,d$ be  the four convex vertices of a pseudo quadrilateral and let $p$ be the intersection point of the two diagonal shortest paths $p(a,c)$ and $p(b,d)$ as shown in Figure \ref{figproof}. It is well known that in any simple polygon the shortest paths between any three points form a pseudo triangle. Thus, according to Observation \ref{triIneq}, $d(a,p) + d(p,b) \geq d(a,b)$ and  $d(c,p) + d(p,d) \geq d(c,d)$. Thus  $d(a,p) + d(p,b) + d(c,p) + d(p,d) \geq  d(a,b) + d(c,d)$. The inequality for $d(a,d)$ and $d(b,c)$ can be shown analogously and thus the observation follows. 
\end{proof}

The following result is known (see for example \cite{geoDiaSimp}) and we thus omit a proof here.
\begin{obs}[Triangle Inequality]
In any pseudo triangle the sum of the lengths of two paths connecting two convex vertices is at least as large as the length of the remaining shortest path.
\label{triIneq}
\end{obs}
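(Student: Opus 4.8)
The plan is to recognize the statement as the triangle inequality for the geodesic metric $d$ and to prove it directly from the definition of shortest paths. Since Observation~\ref{quadIneq} (and through it Lemma~\ref{distLemma}) already relies on this observation, I must argue from first principles here and cannot appeal to either of those results, on pain of circularity.

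First I would fix notation. Let the three convex vertices of the pseudo triangle be $a$, $b$, $c$, and let its sides be the shortest paths $p(a,b)$, $p(b,c)$, $p(a,c)$, so that the two ``paths connecting two convex vertices'' are, say, $p(a,b)$ and $p(b,c)$, while the ``remaining shortest path'' is $p(a,c)$. The claim then reads $d(a,b) + d(b,c) \ge d(a,c)$, together with its two relabelings. The heart of the argument is a concatenation step: the path $p(a,b)$ lies in $P$ and joins $a$ to $b$, and $p(b,c)$ lies in $P$ and joins $b$ to $c$, so gluing them at their common endpoint $b$ yields a single curve from $a$ to $c$ that stays inside $P$ and has length $d(a,b) + d(b,c)$. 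Because $d(a,c)$ is by definition the length of the \emph{shortest} $a$-$c$ path within $P$, it is at most the length of this particular admissible curve, giving $d(a,c) \le d(a,b) + d(b,c)$. Permuting the roles of $a$, $b$, $c$ produces the remaining two inequalities, and the observation follows.

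The main obstacle is conceptual rather than analytic: one must verify that the glued curve is a genuine admissible path (it is, being a union of two paths in $P$ that share the endpoint $b$), and, more importantly, one should pin down where the pseudo-triangle hypothesis actually enters. Its only role is to guarantee that the three sides are realized as the shortest paths $p(\cdot,\cdot)$, so that the three lengths in the inequality are precisely the geodesic distances $d(\cdot,\cdot)$; granting this, the statement is exactly the triangle inequality of the metric $d$, and no finer geometric property of the reflex chains bounding the pseudo triangle is required. This is why the result is classical and its proof may reasonably be omitted, as the authors do.
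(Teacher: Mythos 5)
The paper does not actually prove Observation~\ref{triIneq}: it declares the result known, cites \cite{geoDiaSimp}, and explicitly omits the argument, so there is no in-paper proof to compare yours against. Your proof is correct and fills that gap with the standard metric-space argument: the concatenation of the shortest paths $p(a,b)$ and $p(b,c)$ at their common endpoint $b$ is an admissible $a$-$c$ path inside $P$ of length $d(a,b)+d(b,c)$, so the minimality defining $d(a,c)$ gives $d(a,c)\le d(a,b)+d(b,c)$. You are also right about the two points that need care: first, the circularity constraint, since Observation~\ref{quadIneq} (and through it Lemma~\ref{distLemma}) depends on this observation and so neither may be invoked; second, the role of the pseudo-triangle hypothesis, which is only to ensure that the sides in question are geodesics, so that their lengths coincide with the distances $d(\cdot,\cdot)$. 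That reading matches the paper's sole use of the observation: in the proof of Observation~\ref{quadIneq} the relevant sides are subpaths of the diagonals $p(a,c)$, $p(b,d)$ emanating from their crossing point $p$, and subpaths of shortest paths are shortest paths. If one insists on the literal statement for an arbitrary pseudo triangle, whose concave chains need not be shortest paths in the ambient polygon, the same argument survives with one extra remark: each chain has length at least the geodesic distance between its endpoints, so the concatenation bound only strengthens. In short, your first-principles proof buys self-containment at essentially no cost, while the paper's citation buys brevity.
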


\begin{theorem}
Geodesic Unit Disk Packing is $\mathsf{NP}$-hard in polygons with holes.
\label{unitPackHard}
\end{theorem}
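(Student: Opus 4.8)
The plan is to first reformulate the problem in purely metric terms. As the paper already observes in its discussion of the greedy algorithm, two radius-$1$ geodesic disks have disjoint interiors if and only if their centers are at geodesic distance at least $2$. Hence a geodesic unit disk packing is nothing more than a set of points of $P$ that are pairwise at geodesic distance $\geq 2$, and maximizing the packing cardinality is exactly maximizing the size of such a ``distance-$2$'' set. This is a geodesic analogue of \emph{Maximum Independent Set}, so I would reduce from Maximum Independent Set in planar graphs of maximum degree $3$, which is $\mathsf{NP}$-hard. The whole point of allowing holes is that the geodesic metric of a polygon-with-holes is flexible enough to realize the adjacency pattern of an arbitrary planar graph, something a simply connected region cannot do.

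The construction I would use takes a planar embedding of the input graph $G$ and thickens it: each vertex becomes a small \emph{chamber} of geodesic diameter strictly less than $2$, and each edge becomes a thin, short \emph{corridor} joining the two incident chambers. The bounded faces of the drawing that are not covered by chambers or corridors become the holes of $P$. The geometry is then tuned so that three properties hold simultaneously: (i) since a chamber has diameter $<2$, it can contain at most one center; (ii) for each edge $(u,v)$ the corridor is short enough that \emph{every} point of chamber $u$ lies at geodesic distance $<2$ from \emph{every} point of chamber $v$, so two adjacent chambers can never both host a center; and (iii) for non-adjacent vertices there is no short connection, so any path between their chambers must traverse at least two corridors and therefore has length $\geq 2$, allowing both chambers to host a center. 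By scaling the wall thicknesses and separations one ensures that no unintended short path appears and that the corridors are too short to host an extra center of their own.

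Granting these properties, the equivalence is clean in both directions. Any valid packing selects at most one center per chamber and cannot select centers in both endpoints of an edge, so the chosen chambers form an independent set of $G$; conversely, given an independent set $I$, placing a single center $c_v$ at (the center of) each chamber $v \in I$ yields pairwise distances $\geq 2$ and hence a legal packing. Therefore the maximum packing size equals $\alpha(G)$ (up to a fixed additive constant if one includes baseline centers along the corridors), and deciding whether $P$ admits a packing of size $\geq m$ decides whether $G$ has an independent set of size $\geq m$. Since the polygon is constructed in polynomial time and has size polynomial in $G$, $\mathsf{NP}$-hardness follows.

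I expect the main obstacle to be making this correspondence exact in the \emph{continuous} setting, i.e. ruling out ``cheating'' packings that do better than $\alpha(G)$ by placing centers inside corridors, at chamber--corridor junctions, or in the slightly larger region where up to three corridors meet a degree-$3$ chamber. Handling this requires careful geodesic-distance estimates: I would use the pseudo-triangle and pseudo-quadrilateral inequalities (Observations~\ref{triIneq} and~\ref{quadIneq}) to bound distances through junctions and to argue that every point of $P$ can be ``charged'' to a chamber, so that no configuration beats the chamber-based one. The genuinely delicate quantitative point is choosing chamber diameters and corridor lengths so that adjacent chambers are \emph{always} within distance $<2$ while non-adjacent chambers are \emph{never} brought within distance $2$ through a chain of corridors; this is where the bulk of the verification, and the reliance on holes, will concentrate.
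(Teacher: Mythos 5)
There is a genuine gap, and it sits exactly where you flagged it: the ``cheating'' placements inside corridors are not a technicality that careful distance estimates can discharge --- with only chambers and corridors, no choice of parameters makes the reduction correct. Your property (iii) forces chambers at graph distance two to be at geodesic distance at least $2$, which (since the connecting path runs corridor--chamber--corridor) forces the effective length of an edge, measured chamber center to chamber center, to be at least $1$. But then consider the triangular prism (two triangles $a_1a_2a_3$ and $b_1b_2b_3$ joined by rungs $a_ib_i$), which is planar and $3$-regular with $\alpha(G)=2$: the midpoints of the three rung corridors are pairwise at geodesic distance at least $\tfrac{1}{2}+1+\tfrac{1}{2}=2$, so they form a packing of size $3>\alpha(G)$, and the claimed equivalence fails. (More generally, corridor midpoints of any induced matching cheat; and if the effective edge length is bounded away from $1$, a long path on $n$ vertices gives a thin polygon of length roughly $(1+\gamma)n$ and hence a continuous packing of roughly $(1+\gamma)n/2 > \lceil n/2\rceil = \alpha$.) So the obstacle you deferred is not the ``bulk of the verification'' of a correct construction; it is a counterexample to the construction as described, and an additional gadget is unavoidable.

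The paper's proof starts the same way (reduction from Maximum Independent Set on planar degree-$3$ graphs, orthogonal grid embedding realized as a thin polygon with holes), but its key device --- the one your sketch is missing --- is a \emph{whisker}: each edge is first subdivided into an odd number $l_e$ of unit-length segments (which increases $\alpha$ by exactly $(l_e-1)/2$), and then a polygonal chain of length $1.5$ is attached at the midpoint of every unit edge. An exchange argument shows that any optimal packing may be assumed to place a disk at the free end of every whisker. Such a disk is at distance exactly $2$ from the two endpoints of its edge, so it coexists with disks centered at vertices, but it is at distance $1.5+x<2$ from every interior point of the edge at distance $x<\tfrac{1}{2}$ from the midpoint. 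Hence the forced whisker disks sterilize all of the polygon except the vertex set, the continuous placement problem collapses to discrete independent-set selection on the subdivided graph, and the count is exact: the maximum packing has size $\sum_{e}(l_e-1)/2 + |E'| + \alpha(G)$. This is precisely the mechanism that kills the rung-midpoint cheat above, and it is what your vague ``baseline centers along the corridors'' would have to be developed into; without it (or an equivalent gadget), the reduction does not go through.
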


\begin{proof} 
In \cite{garey1977rectilinear} it is shown that the Maximum Independent Set problem on planar graphs of maximum degree $3$ is $\mathsf{NP}$-hard. Given such an instance $G=(V,E)$, it is easy to see that replacing an edge by a path of odd length $l$ increases the maximum independent set by $(l-1)/2$. We now reduce an instance of this problem to our problem, by orthogonally embedding $G$ in the plane on an integer grid \cite{gridEmd}. We then replace each edge $e \in E$ by a path of $l_e$ straight line edges, each edge is of length $1$, with $l_e$ being an odd number. We denote the obtained graph by $G'=(V',E')$. Denoting the resulting embedding by $P = P(G')$, we attach a polygonal chain of length $1.5$ at the midpoint of each edge. Using the arguments of the proof of Theorem 1 in \cite{penVig}, it follows that there is enough space in the embedding $P$ to attach the polygonal paths on each edge separately. Now any optimal packing centers a geodesic unit disk at the end of each polygonal chain and packs the remaining polygon (whose free space can now only be packed by placing disks at the vertices $V'$) with a maximum independent set for $G'$. Therefore, $G$ has an independent set of size at most $M$ if and only if $P$ has a packing of size at most $\sum_{e \in E} (l_e-1)/2+M+|E'|$.
\end{proof}

\section{Geodesic Unit Disk Covering}
\label{udCover}

In this section we show that covering a polygon with holes using the minimum number of geodesic unit disks is $\mathsf{NP}$-hard.

\begin{problem}[Geodesic Unit Disk Covering]
Given a polygon $P$, find a cover of $P$ with fewest geodesic unit disks.
\label{geoUnCov}
\end{problem}

\begin{theorem}
Geodesic Unit Disk Covering is $\mathsf{NP}$-hard in polygons with holes.
\label{geoUnCovHard}
\end{theorem}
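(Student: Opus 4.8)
The plan is to reduce from the \emph{Minimum Vertex Cover} problem on planar graphs of maximum degree $3$, which is $\mathsf{NP}$-hard since it is complementary to the Maximum Independent Set problem used in Theorem \ref{unitPackHard} \cite{garey1977rectilinear}. Given such a graph $G=(V,E)$, I would build a polygon with holes $P$ by first embedding $G$ orthogonally on an integer grid \cite{gridEmd} and then thickening the embedding into a system of thin \emph{corridors} (one per edge) meeting in small \emph{junction rooms} (one per vertex), exactly as in the construction of Theorem \ref{unitPackHard}; the spacing arguments of \cite{penVig} again guarantee there is enough room to realize the gadgets disjointly. Because each corridor is thin, a geodesic unit disk centered on it covers precisely a sub-interval of geodesic length $2$, so covering a corridor behaves like covering a segment by length-$2$ intervals; this is the observation that lets me translate the continuous covering problem into a combinatorial one.

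Next I would tune, via odd edge-subdivision as in Theorem \ref{unitPackHard}, the geodesic length of each corridor $e$ so that its body is coverable by a fixed number $k_e$ of interior disks, while at each end of $e$ I attach a small pocket at the junction that can be covered \emph{only} by a disk centered inside the adjacent junction room or by one dedicated extra disk placed at that end. A single disk placed in junction room $v$ then simultaneously covers the pockets of all corridors incident to $v$. Writing $S\subseteq V$ for the set of junctions at which the cover places a room-disk, a corridor $e=(u,v)$ is coverable with its $k_e$ body disks together with the room-disks exactly when $u\in S$ or $v\in S$; otherwise one additional disk is forced. Hence the size of any cover equals $\sum_{e} k_e + |S| + \#\{e : e\cap S = \emptyset\}$, and the core combinatorial fact is that $\min_S \bigl(|S| + \#\{e:e\cap S=\emptyset\}\bigr)=\tau(G)$, the vertex-cover number: for any uncovered edge one may add one of its endpoints to $S$ at the same additive cost while only covering more edges, so some optimal $S$ is a vertex cover. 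This yields that $G$ has a vertex cover of size at most $t$ if and only if $P$ can be covered by at most $\sum_e k_e + t$ geodesic unit disks.

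I expect the main obstacle to be the geometric realization together with a \emph{canonical-placement} lemma: I must show that an optimal cover may be assumed to place its disks only at the prescribed positions --- room-centers and evenly spaced interior points of the corridors --- rather than at arbitrary points of $P$. This requires arguing (using the pseudo-triangle and convexity tools behind Lemma \ref{distLemma} and Observation \ref{quadIneq}) that the pockets are geodesically unreachable within distance $1$ from any corridor-interior point, and that no disk centered off the corridor skeleton can cover more than its fair share of corridor length; the parity introduced by the odd subdivisions is what forces the two end-pockets to be handled separately and prevents a single corridor disk from opportunistically covering a junction pocket. Verifying these reachability and packing-density bounds precisely, and checking that the junction rooms for degree-$3$ vertices are large enough to host a covering disk yet small enough not to absorb corridor work, is the delicate part; the remaining counting argument is then routine.
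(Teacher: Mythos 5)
Your high-level strategy coincides with the paper's (reduce from vertex cover on planar graphs of maximum degree $3$, embed orthogonally on a grid \cite{gridEmd}, subdivide edges to odd lengths, invoke the spacing argument of \cite{penVig}), but your gadget does not realize the counting you claim, and in fact cannot encode vertex cover at all. You stipulate that each corridor carries a pocket at \emph{each} end, coverable only by a disk centered in the adjacent junction room or by a dedicated extra disk at that end, and that disks centered in the corridor interior cannot reach pockets. Now take an edge $e=\{u,v\}$ with $u\in S$ but $v\notin S$ --- the generic situation for a vertex cover. The pocket at the $v$-end is covered by no room disk and by no body disk, so the corridor needs $k_e+1$ disks beyond the room disks, not $k_e$. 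Your formula charges an extra disk only when \emph{both} endpoints avoid $S$, so it is wrong; worse, even the easy direction of your reduction (a vertex cover of size $t$ yields a cover of $P$ by $\sum_e k_e+t$ disks) fails, because the far pockets of singly-covered edges are simply left uncovered.

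The structural reason this cannot be patched by tuning lengths is that your pocket condition decouples the optimization. If every end-pocket independently demands either a room disk at its own junction or a dedicated disk at its own end, the cover cost realized by the gadget is $\sum_e k_e+\min_S\bigl(|S|+\#\{(e,w): w\in e,\ w\notin S\}\bigr)$, and each vertex $w$ contributes $\min(1,\deg(w))=1$ independently of all others; the optimum is just $\sum_e k_e+|V|$, attained by $S=V$, computable trivially. (Your identity $\min_S(|S|+\#\{e:e\cap S=\emptyset\})=\tau(G)$ is correct as combinatorics, but it is not the quantity your gadget computes.) Hardness requires coupling: a \emph{single} disk must be able to serve an entire edge gadget, and be forced to sit at one of its two endpoints to do so. This is exactly what the paper's gadget achieves: after subdividing so that every edge has length $1$, each edge $\{u,v\}$ becomes two parallel length-$1$ paths joining $u$ to $v$, each with a length-$0.5$ stub at its \emph{midpoint} (Figure \ref{covGad}). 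Every point of that gadget is within geodesic distance $1$ of $u$ (and of $v$), while a disk centered anywhere else misses one of the two stub tips; hence one disk covers the gadget iff centered at $u$ or $v$, the cover number of $P$ equals the vertex cover number of the subdivided graph, and both directions of the reduction are immediate. Note also that the stubs must hang off corridor midpoints, not corridor ends as in your construction, precisely so that an endpoint disk reaches them at distance exactly $1$; and two parallel stub-bearing paths are needed so that a disk centered at a stub tip cannot itself cover the gadget. Finally, the canonical-placement lemma you defer is a genuine obligation, but it is secondary: with the paper's gadget it reduces to the one-line observation that any disk covering a stub tip must be centered inside that gadget's closure.
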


\begin{proof}
We reduce an instance $G = (V,E)$ of the $\mathsf{NP}$-hard vertex cover problem on planar graphs with maximum degree $3$ (see \cite{Garey:1979:CIG:578533}) to it. It is easy to see that replacing an edge by a path of odd length $l$  increases the size of a vertex cover by $(l-1)/2$. We now reduce such a vertex cover instance to the unit disk cover problem by orthogonally embedding $G$ in the plane on an integer grid \cite{gridEmd}. We then replace each edge $e \in E$ by a path of  $l_e$ straight line edges each length $1$, with $l_e$ being an odd number, thus obtaining a new graph $G'=(V',E')$. We then replace each edge by an edge gadget as shown in Figure \ref{covGad} and call the resulting polygon with holes $P=P(G')$. For an edge $\{u,v\} \in E'$ the corresponding gadget consists of two paths of length $1$ connecting $u$ and $v$. In the middle of each path, an additional path of length $0.5$ is attached. It thus follows that an edge gadget is completely covered by a single unit disk if and only if it the disk is centered at either $u$ or $v$. Thus $G$ has a vertex cover of size at most $M$ if and only if $P$ has a covering of size at most $\sum_{e \in E} (l_e-1)/2+ M$. Furthermore, using the arguments of the proof of Theorem 1 in \cite{penVig}, it is clear that there is enough space in the embedding $P$ to  replace edges in $E'$ by their gadgets without overlapping any other gadgets.
\end{proof}

\begin{figure}[htp]
\center
\definecolor{qqqqff}{rgb}{0.33,0.33,0.33}
\begin{tikzpicture}[line cap=round,line join=round,>=triangle 45,x=1.4cm,y=1.4cm]
\clip(1.91,1.41) rectangle (6.3,2.7);
\draw (2,2)-- (4,1.5);
\draw (4,1.5)-- (6.06,1.98);
\draw (6.06,1.98)-- (4,2.5);
\draw (4,2.5)-- (2,2);
\draw (4,2.5)-- (4,2.2);
\draw (4,2.2)-- (3,2.2);
\draw (3,2.2)-- (3,2.1);
\draw (3,2.1)-- (4,2.1);
\draw (4,1.5)-- (4,1.8);
\draw (4,1.8)-- (3,1.8);
\draw (3,1.8)-- (3,1.9);
\draw (3,1.9)-- (4,1.9);
\begin{scriptsize}
\fill [color=qqqqff] (2,2) circle (1.5pt);
\draw[color=qqqqff] (2.08,2.14) node {$u$};
\fill [color=qqqqff] (6.06,1.98) circle (1.5pt);
\draw[color=qqqqff] (6.14,2.12) node {$v$};
\end{scriptsize}
\end{tikzpicture}
\caption{An edge gadget for an edge $\{u,v\}$ as used in the proof of Theorem \ref{geoUnCovHard}.}
\label{covGad}
\end{figure}
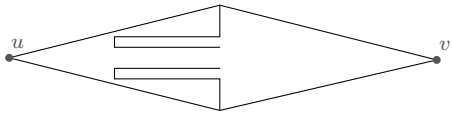

\section{Geodesic $k$-Covering}
\label{geodesicKcov}

In this section we present a $2$-approximation algorithm for covering a polygon, possibly with holes, using $k$ geodesic disks and show that the problem is $\mathsf{NP}$-hard.

\begin{problem}[Geodesic $k$-Covering]
Given a polygon $P$, possibly with holes, find a cover of $P$ with $k$ geodesic disks whose maximal radius is minimized. 
\label{geoCov}
\end{problem}

\begin{problem}[Metric $k$-Clustering]
Given a set $S$ of $n$ points from a metric space, find $k$ smallest disks, such that they cover all points in $S$.
\end{problem}

Since the geodesic distance is a metric (see also Observation \ref{triIneq}), it follows that the \emph{Geodesic $k$-Covering} problem can be stated as a \emph{Metric $k$-clustering} problem, which, although $\mathsf{NP}$-hard, can be approximated within a factor of two \cite{kcenterRef}. On the other hand, the fact that a polygon contains an uncountable number of points requires some further reasoning as to why the Geodesic $k$-Covering problem is of discrete nature.   \\

The following algorithm simply places $k$ points inside a polygon, possibly with holes, such that the minimum geodesic distance among them is maximized.

\begin{algorithm}
\begin{algorithmic}
\caption{gonzalezPlacement($P,k$)}
\STATE $C \leftarrow \{c_1\}$ \COMMENT{with $c_1$ an arbitrary point in $P$}
\STATE $\mathcal{M} \leftarrow \emptyset$ \COMMENT{sequence of shortest path maps}
\FOR{$i \leftarrow 2$ to $k$} 
\STATE compute the shortest path map $M_{i-1}$ for $c_{i-1}$  
\STATE $\mathcal{M}  \leftarrow \mathcal{M}  \cup \{M_{i-1}\}$ \COMMENT{store it in collection $\mathcal{M}$}
\STATE compute  the geodesic voronoi diagram of $C$ in $P$
\STATE compute $A_i = A_i^1 \cup A_i^2 \cup vertices(P)$ 
\COMMENT{as described in the proof of Theorem \ref{2approxGonThm}}
\STATE $c_i \leftarrow \arg\max_{a \in A}\min_{c \in C} d(c,a)$ using $M_j$, with $1\leq j \leq i$
\STATE $C \leftarrow C \cup \{c_i\}$ 
\ENDFOR
\RETURN $C$
\end{algorithmic}
\end{algorithm}


\begin{theorem}
The $gonzalezPlacement$ algorithm finds a geodesic cover of a polygon $P$ (possibly with holes) whose maximum radius is at most twice as large as the largest radius in an optimal cover in time $O(k^2(n+k)\log(n+k))$.
\label{2approxGonThm}
\end{theorem}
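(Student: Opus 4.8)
The plan is to read gonzalezPlacement as the geodesic incarnation of Gonzalez's farthest-point clustering and to split the argument into three parts: a discretization lemma guaranteeing that the finite candidate set $A_i$ always contains a true farthest point, the classical pigeonhole argument yielding the factor $2$, and a running-time accounting. Since the geodesic distance is a metric (its triangle inequality is Observation~\ref{triIneq}), the abstract $2$-approximation of \cite{kcenterRef} for metric $k$-center applies almost verbatim, provided one knows that at each step the algorithm selects a point $p$ that maximizes $f(p):=\min_{c\in C}d(c,p)$ over all of $P$, and not merely over $A_i$. Establishing this is the real content, because $P$ contains uncountably many points and the greedy choice must nonetheless be exact for the monotonicity argument below to hold.

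For the discretization I would define $A_i^1$ and $A_i^2$ as vertices of the planar subdivision obtained by overlaying the geodesic Voronoi diagram of $C$ with the stored shortest path maps $M_1,\dots,M_{i-1}$ inside $P$: let $A_i^1$ be the vertices of the geodesic Voronoi diagram (its triple points and the points where its edges meet $\partial P$), and let $A_i^2$ be the intersections of the shortest path map edges with the Voronoi edges and with $\partial P$. The key observation is that inside any single face of this overlay the nearest center is a fixed $c_j$ and the anchor vertex of the shortest path to $c_j$ is fixed, so there $f(p)=d(c_j,p)=\lvert p-a\rvert+\text{const}$ is a Euclidean distance to a fixed point $a$, hence convex and radially increasing. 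Such a function attains its maximum over the face at a vertex of the face; along a Voronoi edge it is monotone on each hyperbolic branch and is therefore again maximized at an endpoint. Consequently the global maximizer of $f$ over $P$ is a vertex of the overlay, i.e. a point of $A_i^1\cup A_i^2\cup vertices(P)$, which is exactly the set the algorithm searches. The hard part will be the case of holes: in a simple polygon $f$ is convex along every chord, so on $\partial P$ it is maximized at polygon vertices and the boundary crossings are superfluous, but with holes convexity fails and the maximum along a polygon edge can occur where a shortest path map edge crosses it, which is precisely why $A_i^2$ must also record intersections with $\partial P$.

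Granting the discretization, the approximation bound follows the standard route. Writing $r_i=\min_{j<i}d(c_j,c_i)$ for the gain realized at step $i$, the discretization guarantees $r_i=\max_{p\in P}\min_{j<i}d(c_j,p)$; since inserting an additional center only shrinks the map $p\mapsto\min_j d(c_j,p)$ pointwise, the gains satisfy $r_2\ge r_3\ge\cdots\ge r_k\ge r$, where $r=\max_{p\in P}\min_{j\le k}d(c_j,p)$ is the covering radius actually produced. Taking $p^\ast$ to realize $r$ and treating it as a hypothetical $(k+1)$-st center, the $k+1$ points $c_1,\dots,c_k,p^\ast$ are pairwise at geodesic distance at least $r$. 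An optimal cover uses $k$ geodesic disks of radius $r^\ast$, so two of these points, say $a$ and $b$, fall in a common disk with center $o$; the triangle inequality then gives $r\le d(a,b)\le d(a,o)+d(o,b)\le 2r^\ast$, establishing the ratio.

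Finally, for the running time I would charge each of the $k$ iterations as follows: one shortest path map is computed and stored in $O(n\log n)$, the geodesic Voronoi diagram of the at most $k$ centers and the resulting candidate set $A_i$ have complexity $O(n+k)$, and evaluating $f$ at each of these $O(n+k)$ candidates is done by locating the point in each of the $O(k)$ stored shortest path maps at $O(\log(n+k))$ per query. This makes the evaluation step cost $O(k(n+k)\log(n+k))$ per iteration and dominates the other steps; summing over the $k$ iterations yields the claimed $O(k^2(n+k)\log(n+k))$, the extra factor of $k$ arising exactly from querying all previously stored maps rather than relying on point location in the Voronoi diagram alone.
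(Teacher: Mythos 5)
Your proposal follows essentially the same route as the paper's proof: the factor $2$ comes from Gonzalez-style farthest-point insertion for metric $k$-center \cite{kcenterRef}, the continuous problem is discretized through the geodesic Voronoi diagram of the current centers, the chosen candidate is evaluated against the stored shortest path maps, and the per-iteration accounting (one map in $O(n\log n)$, $O(n+k)$ candidates, $O(k)$ map queries per candidate) gives the stated $O(k^2(n+k)\log(n+k))$. The differences are ones of rigor, and they mostly favor you: where the paper simply cites \cite{kcenterRef} for the ratio, you reprove the pigeonhole argument; and where the paper takes $A_i^1$ to be the Voronoi vertices, $A_i^2$ the intersections of Voronoi edges with $\partial P$, and justifies sufficiency with a one-line ``incompletely constrained'' appeal to \cite{Shamos75}, you overlay the Voronoi diagram with the shortest path maps so that on each face the function $f(p)=\min_{c\in C}d(c,p)$ is a shifted Euclidean distance. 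You are right that this extra care matters once $P$ has holes: already with a single center, the farthest point can lie where a cut-locus (shortest-path-map) edge of that center meets $\partial P$, in the middle of a polygon edge; such a point is in your candidate set but not, read literally, in the paper's (the paper implicitly relies on the continuous-Dijkstra ``Voronoi diagram'' containing these same-source bisector edges).

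Two loose ends in your version. First, by your own principle (``the maximizer is a vertex of the overlay'') the candidates must also include the \emph{interior vertices} of the shortest path maps, i.e.\ triple points of the cut locus of a single center: once $P$ has two or more holes such a point can be the unique farthest point, yet your explicit definitions of $A_i^1$ and $A_i^2$ (SPM edges crossed only with Voronoi edges and with $\partial P$) omit it. Second, you assert $|A_i|=O(n+k)$, but for your enlarged candidate set this needs an argument, since the $O(n+i)$ bound of \cite{AronovGeo} counts only Voronoi features. The clean fix is to note that the relevant part of your overlay is exactly the multi-source shortest path map of the $i$ current centers (inter-source bisectors are the Voronoi edges; same-source bisectors are the cut-locus edges inside each cell), whose complexity is $O(n+i)$ by the continuous-Dijkstra analysis \cite{Hershberger97anoptimal}; with that observation, and after adding the interior SPM vertices to the candidate set, your approximation and running-time arguments go through as written. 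One further small imprecision: $f$ itself is a minimum of geodesic distances and hence not convex along chords even in a simple polygon; the convexity you invoke holds for each $d(c_j,\cdot)$ and should be applied within a single Voronoi cell, which is how you in fact use it.
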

\begin{proof}
The approximation ratio follows from the fact that Metric $k$-Clustering can be approximated within a factor of two (see \cite{kcenterRef}). 
In order to prove the running time we need to bound the time to find the next center $c_{i+1}$. For this let $C_i$ denote the set of centers after the $i$-th iteration. We find the next center in a polygon, possibly with holes, by computing the geodesic voronoi diagram \cite{AronovGeo} of $C_{i}$ in $P$ using the continuous Dijkstra paradigm (see also \cite{higherGeoVorHoles}) in time  $O((n+i)\log(n+i))$ \cite{Hershberger97anoptimal}. The center $c_{i+1}$ is either contained in the set of vertices of the voronoi diagram (denoted by $A^1_i$) or lies at the intersection points of the voronoi edges with the boundary of $P$ (denoted by $A^2_i$) or in the set of vertices of $P$ (denoted by $vertices(P)$). This holds since any other point is incompletely constrained and its distance to $C_i$ could thus be enlarged \cite{Shamos75}. We denote the union $A^1_i \cup A^2_i \cup vertices(P)$ of all points of interest by $A_i$ and remark that the complexity of the geodesic voronoi diagram is $O(n+i)$ (see \cite{AronovGeo}), implying $|A_i| =  O(n+i)$. Next, we compute the shortest path map $M_i$ for $c_i$ in time $O(n \log n)$ using again the algorithm of \cite{Hershberger97anoptimal} and add it to the collection $\mathcal{M}$ of shortest path maps. Finding the point  furthest from any point in $C_i$, i.e. $\arg\max_{a \in A}\min_{c \in C_i} d(c,a)$, can be done by using each of the $i$  shortest path maps in $\mathcal{M}$ in total time $O(i(i+n)\log n)$ per iteration. \end{proof}

\begin{theorem}
Geodesic $k$-Covering is $\mathsf{NP}$-hard.
\label{hardCov}
\end{theorem}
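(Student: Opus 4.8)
The plan is to prove $\mathsf{NP}$-hardness of Geodesic $k$-Covering by reduction from the Geodesic Unit Disk Covering problem, which was already shown to be $\mathsf{NP}$-hard in polygons with holes in Theorem \ref{geoUnCovHard}. The key observation is that the unit disk covering problem is essentially the decision version of $k$-covering restricted to radius $1$: covering $P$ with the fewest unit disks asks whether $k$ disks of radius $1$ suffice, whereas $k$-covering asks for the minimum radius achievable with exactly $k$ disks. These two formulations are tightly linked, so the bulk of the work is making this link rigorous.

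First I would set up the reduction. Given an instance of Geodesic Unit Disk Covering, namely a polygon with holes $P'$ together with a target number $m$, I construct an instance of Geodesic $k$-Covering by taking the same polygon $P'$ (or a suitably rescaled copy) and setting $k = m$. The decision question ``can $P'$ be covered with $m$ geodesic unit disks?'' then becomes ``is the optimal radius for covering $P'$ with $m$ geodesic disks at most $1$?''. Since the reduction is just the identity map on the polygon together with a copy of the parameter, it is clearly computable in polynomial time, so the reduction itself is trivial; the content lies entirely in the correctness equivalence.

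The main step is therefore to argue the equivalence: $P'$ admits a cover by $m$ geodesic unit disks if and only if the minimum maximal radius over all $m$-disk covers of $P'$ is at most $1$. The forward direction is immediate, since an $m$-disk unit cover is itself an $m$-disk cover of radius at most $1$. For the reverse direction, if $m$ disks of maximal radius $r \le 1$ cover $P'$, then enlarging each disk to radius exactly $1$ still yields a valid cover (a geodesic disk of larger radius contains the smaller one with the same center), giving a cover by $m$ unit disks. This establishes the biconditional and hence the correctness of the reduction.

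The hard part, and the step I expect to require the most care, is handling the numerical subtlety that the optimal radius in a $k$-covering instance is a continuous quantity that may be irrational or only approachable rather than attained, so I must ensure that the threshold comparison ``optimal radius $\le 1$'' can be posed as a legitimate decision problem and that the equivalence is not broken by boundary cases where the optimum equals exactly $1$. I would address this by noting that the geodesic distances arising from the embedded grid graph in the construction of Theorem \ref{geoUnCovHard} are determined by a discrete candidate set of centers, so the optimal radius takes values among a finite, polynomially-describable set of algebraic numbers; the comparison against the threshold $1$ is then well defined and the ``if and only if'' statement of Theorem \ref{geoUnCovHard} transfers cleanly. With this discreteness in hand, the chain of equivalences shows that a polynomial algorithm for Geodesic $k$-Covering would decide Geodesic Unit Disk Covering, completing the reduction.
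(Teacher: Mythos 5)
Your proposal is correct and follows essentially the same route as the paper: reduce from Geodesic Unit Disk Covering (Theorem \ref{geoUnCovHard}) by observing that covering with $m$ unit disks is equivalent to the $k$-Covering optimum being at most $1$ with $k=m$. The paper states this in two sentences; your additional care about attainment of the optimal radius and the well-definedness of the threshold comparison is a reasonable (and valid) elaboration of the same argument rather than a different approach.
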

\begin{proof} 
From the proof of Theorem \ref{geoUnCovHard} it clearly follows that the decision version of the Geodesic Unit Disk Covering problem, i.e. whether a given polygon with holes can be covered with $k$ geodesic unit disks, is $\mathsf{NP}$-hard. Since solving the $k$-Covering problem and checking whether the minimum radius is at most one decides the Geodesic Unit Disk Covering problem, the theorem follows.
\end{proof}


\section{Geodesic $k$-Packing}
\label{geoKpacking}

In this section we present a $4$-approximation algorithm for packing $k$ geodesic disks into a polygon, possibly with holes, and show that the problem is $\mathsf{NP}$-hard.

\begin{problem}[Geodesic $k$-Packing]
Given a polygon $P$, possibly with holes, pack $k$ geodesic disks whose minimum radius is maximized.
\end{problem}

\begin{theorem}
The $gonzalezPlacement$ algorithm finds a geodesic packing whose minimum radius is at least a fourth of the minimum  radius in an optimal solution.
\end{theorem}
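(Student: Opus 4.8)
The plan is to reduce Geodesic $k$-Packing to a max-min dispersion problem and then exploit the fact that $gonzalezPlacement$ is exactly the farthest-point heuristic for that dispersion problem. First I would record the dictionary between packings and point dispersions. Two geodesic disks of radii $r_i, r_j$ centered at $c_i, c_j$ have disjoint interiors if and only if $d(c_i, c_j) \ge r_i + r_j$; the forward direction follows from the triangle inequality of Observation \ref{triIneq}, and the reverse direction by exhibiting a common interior point on the shortest $c_i c_j$-path when $d(c_i,c_j) < r_i + r_j$. Consequently, if $o_1, \ldots, o_k$ are the centers of an optimal packing whose minimum radius is $r^*$, then $d(o_a, o_b) \ge 2r^*$ for all $a \ne b$; conversely, any $k$ points that are pairwise at geodesic distance at least $D$ support a valid packing of $k$ equal disks of radius $D/2$. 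Writing $D^* = \max \min_{a\ne b} d(p_a, p_b)$ for the optimal $k$-point dispersion distance, this gives $r^* = D^*/2$. Since $gonzalezPlacement$ returns centers $C = \{c_1, \ldots, c_k\}$ and we pack equal disks of radius $\tfrac12\min_{i\ne j} d(c_i,c_j)$, it suffices to lower-bound $D_G := \min_{i\ne j} d(c_i, c_j)$ in terms of $D^*$.

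Second, I would analyze the greedy sequence. Let $C_i = \{c_1, \ldots, c_i\}$ and let $r_i = \max_{x \in P}\min_{c \in C_i} d(c, x)$ be the geodesic covering radius after $i$ centers. By construction each $c_{i+1}$ is a globally farthest point from $C_i$, so $d(c_{i+1}, C_i) = r_i$, and because enlarging the center set can only decrease the farthest distance, the sequence $r_1 \ge r_2 \ge \cdots$ is non-increasing. The first key step is then $D_G \ge r_{k-1}$: for any $i < j \le k$ we have $d(c_i, c_j) \ge d(c_j, C_{j-1}) = r_{j-1} \ge r_{k-1}$, since $c_i \in C_{j-1}$ and $j-1 \le k-1$.

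The third step is a pigeonhole argument bounding $r_{k-1}$ from below. The $k$ optimal dispersion points $o_1, \ldots, o_k$ are each within $r_{k-1}$ of some center of the $(k-1)$-point set $C_{k-1}$; since there are $k$ points but only $k-1$ centers, two of them, say $o_a, o_b$, have the same nearest center $c \in C_{k-1}$, whence $D^* \le d(o_a, o_b) \le d(o_a, c) + d(c, o_b) \le 2 r_{k-1}$. Combining the two steps gives $D_G \ge r_{k-1} \ge D^*/2$, so the packing produced has minimum radius $\tfrac12 D_G \ge D^*/4 = r^*/2$, which is in particular at least $r^*/4$, establishing the theorem (and in fact yielding the stronger factor $2$).

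The main obstacle is not the metric bookkeeping — which only uses Observation \ref{triIneq} and the monotonicity of the covering radii — but justifying that each $c_{i+1}$ chosen by $gonzalezPlacement$ over the finite candidate set $A_i$ is a globally farthest point of the continuous polygon from $C_i$, so that the identity $d(c_{i+1}, C_i) = r_i$ and the ensuing pigeonhole over all of $P$ are legitimate. This is exactly the discretization already established in the proof of Theorem \ref{2approxGonThm}: the function $x \mapsto \min_{c\in C_i} d(c,x)$ attains its maximum on $P$ at a vertex of the geodesic Voronoi diagram of $C_i$, at an intersection of a Voronoi edge with $\partial P$, or at a vertex of $P$ — precisely the set $A_i$ — so the greedy step over $A_i$ coincides with the true farthest-point step and the argument above goes through verbatim.
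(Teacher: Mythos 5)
Your proof is correct, and it takes a genuinely different---and in fact sharper---route than the paper's. The paper argues through covering: letting $r$ be the covering radius of the returned centers, it invokes the well-known Gonzalez property that the centers are pairwise at distance at least $r$ (so radius-$r/2$ disks pack), and then applies Proposition \ref{pack2cov} (the optimal packing radius is at most twice the radius of \emph{any} $k$-cover) to get the factor $4$; a factor of $2$ is thus lost twice, once in halving $r$ and once in the packing-versus-covering comparison. You instead translate the problem exactly into max-min dispersion (showing $s^* = D^*/2$ via the triangle inequality in both directions) and reprove the standard farthest-point-greedy guarantee for dispersion: $\min_{i\ne j} d(c_i,c_j) \ge r_{k-1}$ by the greedy selection rule, and $D^* \le 2r_{k-1}$ by pigeonholing the $k$ optimal points among the $k-1$ centers of $C_{k-1}$. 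This loses the factor $2$ only once, so your packing has minimum radius at least $s^*/2$, strictly stronger than the stated factor $4$. Both arguments rest on the same discretization claim from the proof of Theorem \ref{2approxGonThm} (that the point chosen from the finite candidate set $A_i$ is a true farthest point of $P$ from $C_i$), which you correctly identify as the only delicate ingredient; the remaining caveats are trivial (the statement implicitly assumes $k\ge 2$, and compactness of $P$ ensures the relevant maxima are attained).
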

\begin{proof}
Letting $r$ denote the largest radius needed to cover $P$ with geodesic disks centered at the points returned by $gonzalezPlacement$, it is well known (see \cite{kcenterRef}) that all centers found by $gonzalezPlacement$ have distance of at least $r$. Thus centering disks of radius $r/2$ at these points provides a $k$-Packing, which, according to Proposition \ref{pack2cov} has a minimum radius of at least a fourth of the optimal solution.
\end{proof}

\begin{prop}
Let $s^*$ denote the optimal radius of the Geodesic $k$-Packing problem and let $r$ be the maximum radius of an arbitrary Geodesic $k$-Cover. It then holds that $s^* \leq 2r$.
\label{pack2cov}
\end{prop}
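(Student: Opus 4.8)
The plan is to argue by contradiction, exhibiting $k+1$ points of $P$ that are pairwise farther apart than any single cover disk can accommodate. First I would fix the two objects in play: let $p_1,\dots,p_k$ be the centers of an optimal $k$-packing, so that, since the radius-$s^*$ geodesic disks centered at them are interior-disjoint, $d(p_i,p_j)\ge 2s^*$ for all $i\ne j$; and let $D_1,\dots,D_k$ be the disks of an arbitrary $k$-cover, each of radius at most $r$, whose union is all of $P$. The basic obstruction I would record is a pigeonhole fact: since any two points lying in a common geodesic disk of radius at most $r$ are at geodesic distance at most $2r$ from each other (by the triangle inequality through the disk's center, the geodesic distance being a metric), a family of $k$ cover disks can never contain a set of $k+1$ points that are pairwise at distance strictly greater than $2r$, as two of them would be forced into the same disk.

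Assume for contradiction that $s^*>2r$. Then $p_1,\dots,p_k$ are already pairwise at distance $\ge 2s^*>4r>2r$, so they furnish $k$ of the desired $k+1$ points, and the crux is to manufacture one more point $w\in P$ with $d(w,p_i)>2r$ for every $i$. I would obtain $w$ on the shortest geodesic path from $p_1$ to $p_2$: since this path has length $d(p_1,p_2)\ge 2s^*>2r$ and the distance from $p_1$ increases continuously along it, there is a point $w$ at distance exactly $2r+\eta$ from $p_1$, where $\eta$ is any fixed value with $0<\eta<2s^*-4r$ (a positive window precisely because $s^*>2r$). This gives $d(w,p_1)>2r$ directly, and for every other center the reverse triangle inequality yields $d(w,p_i)\ge d(p_1,p_i)-d(p_1,w)\ge 2s^*-(2r+\eta)>2r$, where the last step is exactly the defining inequality for $\eta$. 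Combining the two ingredients finishes the argument: $\{p_1,\dots,p_k,w\}$ is a set of $k+1$ points of $P$ pairwise at geodesic distance strictly greater than $2r$, which the pigeonhole fact forbids for a $k$-cover of radius $r$, contradiction; hence $s^*\le 2r$.

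I expect the main obstacle to be the construction of the extra point $w$, specifically choosing $\eta$ so that $w$ stays more than $2r$ away from $p_1$ and, simultaneously, from all of $p_2,\dots,p_k$: these two requirements pull $\eta$ in opposite directions and both are met only because the hypothesis $s^*>2r$ leaves the positive window $0<\eta<2s^*-4r$. It is exactly this balancing that forces the factor $2$ rather than the naive $s^*\le r$; indeed a three-armed star-shaped polygon with $k=3$ (packing centered at the arm tips versus a cover centered at the arm midpoints) realizes $s^*=2r$, so the bound is tight. A minor point to confirm is that $w$ genuinely exists, i.e.\ that a packing center other than $p_1$ lies at distance exceeding $2r+\eta$; this is guaranteed whenever $k\ge 2$ since $d(p_1,p_2)\ge 2s^*>4r$, while the case $k=1$ is degenerate and can be handled separately.
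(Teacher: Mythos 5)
Your proof is correct and takes essentially the same approach as the paper: assume $s^* > 2r$ for contradiction, exhibit $k+1$ points of $P$ that are pairwise at geodesic distance greater than $2r$ (the $k$ packing centers plus one extra witness point), and conclude by pigeonhole that no $k$ disks of radius $r$ can cover them all. The only cosmetic difference is the witness point: the paper uses the midpoint of a closest pair of packing centers, which is at distance at least $s^* > 2r$ from every center, while you place it at distance $2r+\eta$ from $p_1$ along the geodesic to $p_2$; both choices work.
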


\begin{proof}

For contradiction suppose that $s^* > 2r$ and let $C$ be the centers of a packing achieving such a minimum radius. Let $q \in P$ be a point lying halfway between two centers $c,c'\in C$ for which $d(c,c') = 2s^*$.  Observe that by the (geodesic) triangle inequality, $\min_{c \in C} d(q,c) \geq s^* > 2r$ and thus no geodesic radius $r$ disk containing $q$ can contain any element of $C$. But this contradicts the fact that the polygon can be covered by $k$ disks of maximum radius $r$, since each of the $k$ points in $C$ need to be contained in a disk and no disk can contain more than one point of $C$. 
\end{proof}

Using a similar argument as for the proof of Theorem \ref{hardCov}, we obtain the following Theorem.

\begin{theorem}
Geodesic $k$-Packing is $\mathsf{NP}$-hard.
\end{theorem}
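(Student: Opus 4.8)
The plan is to mirror the proof of Theorem \ref{hardCov}, but reducing from the packing side rather than the covering side. The hardness established in Theorem \ref{unitPackHard} shows that the optimization problem Geodesic Unit Disk Packing is $\mathsf{NP}$-hard in polygons with holes; its decision version, namely whether a given polygon with holes admits a packing of at least $k$ geodesic unit disks, is therefore also $\mathsf{NP}$-hard. I would reduce this decision problem to Geodesic $k$-Packing.

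First I would take an instance $(P,k)$ of the unit disk packing decision problem and feed the same pair $(P,k)$ into Geodesic $k$-Packing, obtaining its optimal value $s^*$, i.e. the largest attainable minimum radius among $k$ disks forming a packing in $P$. The crucial equivalence to verify is that $s^* \geq 1$ holds if and only if $P$ admits a packing of $k$ geodesic unit disks. For the forward direction, if $P$ packs $k$ unit disks, then those same centers with common radius $1$ already form a feasible $k$-packing of minimum radius $1$, so $s^* \geq 1$. Conversely, if $s^* \geq 1$, there are $k$ pairwise interior-disjoint geodesic disks each of radius at least $1$; since a geodesic disk of radius at least $1$ contains, by definition, the concentric geodesic disk of radius $1$, replacing each disk by its concentric unit disk preserves pairwise disjointness and yields a packing of $k$ geodesic unit disks. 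Hence a (hypothetical) polynomial-time algorithm for Geodesic $k$-Packing, followed by comparing $s^*$ against $1$, decides the $\mathsf{NP}$-hard unit disk packing decision problem, and the theorem follows.

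The step I expect to require the most care is the concentric-shrinking argument, because it is the place where the geometry of geodesic disks (as opposed to arbitrary regions) is actually used: one must confirm the radius-monotone nesting, so that the concentric radius-$1$ disk sits inside any concentric disk of radius at least $1$ and shrinking never destroys the packing property. A minor additional point is to ensure the $k$-Packing instance is always feasible with positive optimum, which holds whenever $P$ has nonempty interior, so that $s^*$ is well defined and the comparison against $1$ is meaningful. Both points are routine, which is precisely why the reduction runs parallel to that of Theorem \ref{hardCov}.
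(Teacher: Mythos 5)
Your proposal is correct and takes essentially the same approach as the paper: the paper proves this theorem by invoking ``a similar argument as for the proof of Theorem \ref{hardCov}'', which is precisely the reduction you spell out --- use the $\mathsf{NP}$-hardness of the decision version of Geodesic Unit Disk Packing established by Theorem \ref{unitPackHard}, solve Geodesic $k$-Packing on the same instance, and test whether the optimal value $s^*$ is at least $1$. Your concentric-shrinking argument correctly fills in the one geometric detail (radius-monotone nesting of geodesic disks) that the paper leaves implicit.
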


\section{Exact Covering with Two Geodesic Disks}
\label{exact2cov}

In this section we are studying the problem of covering a simple polygon $P$ with two geodesic disks of minimum maximal radius. We solve this problem by first considering the decision version, i.e. whether $P$ can be covered with two radius $r$ disks. We then apply parametric search \cite{Megiddo83applyingparallel} on the decision algorithm in order to solve the minimization problem.


Following \cite{Shin98computingtwo}, the basic idea for solving the decision problem is to first compute an arrangement $\mathcal{C}$ of geodesic radius $r$ circles, each centered at a convex vertex of $P$ in time $O(n^2)$ using the algorithm in \cite{Guibas} (see also \cite{geodPac}). If $P$ can be covered by two geodesic disks of radius $r$ then it can be covered with two such disks centered on arcs of the circles in $\mathcal{C}$. This can be seen to hold, by noting that such a configuration minimizes the distance between the centers of the two covering disks. The $testTwoDiskCover$ algorithm now tests for each arc pair in $\mathcal{C}$ whether two disks centered at these arcs cover all vertices of $P$. It is easy to see that independently of where in an arc of $\mathcal{C}$ a radius $r$ disk is centered, it always contains the same vertices of $P$. Furthermore, Lemma \ref{twoEdges} ensures that if all vertices are covered, then there are at most two uncovered edges. Using the algorithm of \cite{Shin98computingtwo}, one can check in constant time if the at most two uncovered edges can be covered with two disks centered in the same arcs as the current disks. Lastly, according to Lemma \ref{boundInt}, a completely covered boundary implies a covering of the interior and thus correctness of the $testTwoDiskCover$ algorithm follows. Furthermore, it is easy to see that the running time of $testTwoDiskCover$ is $O(n^5)$. Thus applying parametric search on the decision problem results in an algorithm for finding two geodesic disks of minimum maximal radius covering $P$ which runs in time $O(n^8 \log n)$.

\begin{algorithm}
\begin{algorithmic}
\STATE $\forall v \in V_{convex}(P)$ : center a geodesic $r$-circle $C_v$ on $v$ \COMMENT{$V_{convex}(P)$ denoting the set of convex vertices of $P$}
\STATE build circle arrangement $\mathcal{C}$
\FOR{$\{p,q\} \in V_{convex} \times V_{convex}$ with $p \neq q$} 
\FOR{Arc $A \in C_p$ of $\mathcal{C}$}  \STATE \COMMENT{$C_p$ denoting the geodesic $r$-circle centered at $p$.}
\STATE center a disk $D_a$ at some $a \in A$ 
\FOR{Arc $B \in C_q$ of $\mathcal{C}$} 
\STATE center a disk $D_b$ at some $b \in B$ 
\IF{$(V(P) \not\subseteq D_a \cup D_b)$}
\STATE  \COMMENT{$D_a$ and $D_b$ do not cover all vertices.}
\STATE continue
\ENDIF
\STATE Let $E_{unvoc}$ be the $\leq 2$ uncovered edges.
\IF{($\exists a' \in A, b' \in B$ s.t. disks $D_{a'}$ and $D_{b'}$ cover $E_{unvoc}$)}
\RETURN $true$
\ENDIF
\ENDFOR
\ENDFOR
\ENDFOR
\RETURN $false$
\end{algorithmic}
\caption{testTwoDiskCover($P,r$)}
\label{algoTest}
\end{algorithm}

\begin{lemma}
If two geodesic disks cover all vertices of a simple polygon then there are at most two uncovered edges.
\label{twoEdges}
\end{lemma}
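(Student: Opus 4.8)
The plan is to reduce the claim to a statement about arcs of $\partial P$ and then exploit geodesic convexity together with the geodesic bisector of the two centers. First I would recall the well-known fact that a geodesic disk in a simple polygon is geodesically convex; in particular the geodesic distance from a fixed point is a convex function along any straight segment contained in $P$, so $d(a,\cdot)$ attains its maximum on such a segment at an endpoint. Using this, I would observe that an edge $e=(u,w)$ of $P$ can be uncovered only if it is \emph{bichromatic}, i.e. one endpoint lies in $D_a\setminus D_b$ and the other in $D_b\setminus D_a$: if both endpoints lie in a common disk $D$, then since $e$ is itself the shortest path between its consecutive endpoints, geodesic convexity of $D$ forces $e\subseteq D$ and $e$ is covered. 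Hence it suffices to bound the number of maximal uncovered sub-arcs (\emph{gaps}) of $\partial P$. Since all vertices are covered, every such gap is vertex-free and therefore lies in the relative interior of a single edge, so bounding the number of gaps by two bounds the number of uncovered edges by two.

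The main device is the geodesic bisector $B=\{x\in P: d(a,x)=d(b,x)\}$ of the two disk centers $a$ and $b$. For two point sites in a simple polygon, $B$ is a simple curve whose two endpoints lie on $\partial P$ and which splits $P$ into the region $P_a$ of points closer to $a$ and the region $P_b$ of points closer to $b$; correspondingly it splits $\partial P$ into two arcs at the two crossing points $z_1,z_2$. The key observation is that inside $P_a$ the two-disk cover behaves like a single disk: if $x\in P_a$ and $x\in D_b$, then $d(a,x)\le d(b,x)\le r$, so $x\in D_a$; hence $(D_a\cup D_b)\cap P_a=D_a\cap P_a$, and symmetrically on the $P_b$ side. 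In particular every covered vertex of $P_a$ already lies in $D_a$.

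With this in hand I would rule out any gap that avoids both $z_1$ and $z_2$. Such a gap $U$ lies entirely in one arc, say in $P_a$; its two endpoints are covered boundary points of $P_a$, hence points of $D_a$ by the key observation, and $U$ contains no vertex, so its closure is a straight sub-segment of a single edge. But then geodesic convexity of $D_a$ applied to these endpoints forces the whole segment into $D_a$, contradicting that $U$ is uncovered. Therefore every gap must contain $z_1$ or $z_2$; since these are only two points and gaps are pairwise disjoint, there are at most two gaps and hence at most two uncovered edges. I expect the main obstacle to be precisely that, unlike the convex-polygon or Euclidean setting, $D_a\cap\partial P$ and $D_b\cap\partial P$ need not be connected arcs, so a naive ``two colour classes give two transitions'' argument fails; the bisector partition is what restores the needed structure, and the one external fact I would want to pin down carefully is that the geodesic bisector of two points in a simple polygon is a single simple curve meeting $\partial P$ in exactly two points.
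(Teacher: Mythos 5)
Your proof is correct and rests on the same two pillars as the paper's own argument: the convexity fact that a geodesic disk containing both endpoints of an edge contains the whole edge, and the theorem (cited by the paper from Aronov's work on geodesic Voronoi diagrams) that the geodesic bisector of two points in a simple polygon meets $\partial P$ in exactly two points. The paper reaches the conclusion more directly --- an uncovered edge must have its endpoints in different disks, so by continuity of $d(a,\cdot)-d(b,\cdot)$ its interior contains a bisector point lying on $\partial P$, and there are only two such points --- whereas your partition of $P$ by the bisector and the analysis of maximal uncovered gaps is a somewhat more elaborate packaging of essentially the same idea.
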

\begin{proof}
By the triangle inequality it follows that if a geodesic disk covers both endpoints of an edge, it also covers its interior. Therefore, any uncovered edge has its endpoints covered by two different disks and thus there has to be a point in the interior of the edge which is equidistant from the two disk centers. Since in \cite{AronovGeo} it is shown that the geodesic bisector between two points has exactly two points on the polygon boundary, there can be at most two such edges.
\end{proof}

\begin{lemma}
If two geodesic disks cover the boundary of a simple polygon, then they also cover its interior.
\label{boundInt}
\end{lemma}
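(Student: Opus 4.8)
The plan is to argue by contradiction through a single scalar ``deficiency'' function and to show it cannot attain a positive maximum in the interior. Write $c_1,c_2$ for the two disk centers and $r_1,r_2$ for their radii, and define $\phi(p)=\min\{\,d(c_1,p)-r_1,\ d(c_2,p)-r_2\,\}$ for $p\in P$. A point $p$ is covered precisely when $\phi(p)\le 0$, so the hypothesis that the two disks cover $\partial P$ says $\phi\le 0$ on $\partial P$, and the conclusion to be proved is $\phi\le 0$ on all of $P$. Suppose not. Since $\phi$ is continuous and $P$ is compact, $\phi$ attains a maximum value $M>0$ at some point $q^{*}$, and because $\phi\le 0<M$ on $\partial P$, this maximizer must lie in $int(P)$.

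The next ingredient is the local structure of the geodesic distance function. In a simple polygon, near a point $q^{*}$ at which the shortest path from $c_i$ is unique and passes through no vertex, the distance $d(c_i,\cdot)$ equals $\ell_i+\lvert\,\cdot-s_i\rvert$, where $s_i$ is the last vertex on that shortest path (or $c_i$ itself when $c_i$ is visible from $q^{*}$) and $\ell_i$ is the geodesic distance from $c_i$ to $s_i$; this is the standard funnel/shortest-path-map description. In particular the gradient of $d(c_i,\cdot)$ at $q^{*}$ is the unit ``arrival direction'' $u_i$ of the geodesic from $c_i$, and consistently with the triangle inequality (Observation \ref{triIneq}) $d(c_i,\cdot)$ is convex along geodesics. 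The decisive consequence is that $d(c_i,\cdot)$ has no interior local maximum, since moving in direction $u_i$ strictly increases it.

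The contradiction at $q^{*}$ then splits into cases. If one term of the minimum is strictly smaller, say $d(c_1,q^{*})-r_1<d(c_2,q^{*})-r_2$, then $\phi$ coincides with $d(c_1,\cdot)-r_1$ near $q^{*}$, which has no interior local maximum, contradicting the maximality of $q^{*}$. If the two terms tie, the first-order change of $\phi$ in a direction $w$ is $\min\{\langle u_1,w\rangle,\langle u_2,w\rangle\}$; whenever $u_1\neq -u_2$ one can choose $w$ making both inner products positive, so $\phi$ increases, again a contradiction. The only surviving possibility is $u_1=-u_2$, i.e. the two arrival directions are antipodal; then $q^{*}$ lies strictly between the effective sources $s_1$ and $s_2$ on a common line, and moving perpendicular to that line increases both $\lvert\,\cdot-s_1\rvert$ and $\lvert\,\cdot-s_2\rvert$ to second order (the Euclidean distance to a point is strictly convex transverse to its radial direction), so $\phi$ strictly exceeds $M$. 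In every case $q^{*}$ fails to be a maximizer, so $M>0$ is impossible and $\phi\le 0$ throughout $P$.

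I expect the main obstacle to be exactly this antipodal ``tie'' configuration, where the first-order directional test is inconclusive and one must descend to the second-order perpendicular perturbation. A secondary technicality is that $q^{*}$ may be a non-smooth point of the shortest-path map, where the shortest path to some $c_i$ is not unique. I would handle this by writing $d(c_i,\cdot)$ locally as a minimum of finitely many smooth convex branches of the form $\ell_i+\lvert\,\cdot-s_i\rvert$ and running the same directional argument simultaneously against every branch; alternatively, since the set of such ridge points has measure zero, one can perturb $q^{*}$ slightly into a smooth point without lowering $\phi$ below $M$ and then apply the smooth argument there.
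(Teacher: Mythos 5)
Your variational setup and the three\-way case analysis at the interior maximizer $q^{*}$ (one branch strictly smaller; tie with $u_1\neq -u_2$; tie with $u_1=-u_2$) is sound, and it is a genuinely different route from the paper's proof, which is global rather than local: the paper takes an uncovered point $p$, draws the vertical chord through $p$ with endpoints $r,s\in\partial P$, and applies the pseudo\-quadrilateral/triangle inequalities (Observations \ref{quadIneq} and \ref{triIneq}) to exhibit an uncovered boundary point. However, your proposal has a genuine gap exactly at what you dismiss as a ``secondary technicality'': points where some $d(c_i,\cdot)$ is non\-smooth because several shortest\-path branches are active. Neither of your two fixes works. Running the directional argument ``simultaneously against every branch'' fails once three or more active arrival directions positively span the plane: then no direction increases all branches to first order, and a minimum of three cone functions really can have a strict interior local maximum (consider $\min_j |q-s_j|$ at the circumcenter of an acute triangle $s_1s_2s_3$ --- every direction of motion decreases the distance to some $s_j$). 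The measure\-zero perturbation is logically invalid: the perturbed point is no longer a maximizer of $\phi$, so the fact that smooth points cannot be local maxima yields no contradiction there; the contradiction must be extracted at $q^{*}$ itself.

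This gap is not cosmetic, because the configuration you wave away is precisely where the hypothesis that $P$ is \emph{simple} must enter, and your argument as written never uses simple\-connectivity at all. In a polygon with two holes, the geodesic distance from a single center can have a strict interior local maximum at a triple point of its cut locus (three shortest\-path classes arriving in directions that positively span the plane), and one can shape the outer boundary so that a single geodesic disk covers all of $\partial P$ while missing that triple point; that is, the lemma is false for polygons with holes, so a proof that ignores this case proves too much. The repair is to invoke the standard fact that in a simple polygon the shortest path from a fixed point to every other point is unique and the shortest\-path map has no ridge curves: consequently $d(c_i,\cdot)$ is differentiable with a unit gradient at every interior point other than $c_i$ (across a window of the map the two local cone expressions for the distance have equal gradients, so the function is $C^1$ there), and $q^{*}\neq c_i$ since $\phi(c_i)\le -r_i<0<M$. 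With that one citation, only your three cases can occur and your argument closes; without it, the proof is incomplete on the very point that separates simple polygons from polygons with holes.
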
 
\begin{proof}
Let $D_1$ and $D_2$ be two geodesic disks centered at $c_1$, $c_2$ respectively which, w.l.o.g., both lie on the $x$-axis and which cover the boundary $\partial P$. Assume for contradiction that there is a point $p \in P \setminus (D_1 \cup D_2)$ and let $s$ be the vertical line segment through $p$ which ends in $\partial P$ at the point $r$ and $s$. The shortest paths among $c_1, c_2, r$ and $s$ span a pseudo quadrilateral containing $p$ in its interior and thus by the triangle inequality $\max\{d(c_1, r), d(c_1, s)\} \geq d(c_1, p)$ and $\max\{d(c_2, r), d(c_2, s)\} \geq d(c_2, p)$ contradicting that the boundary is covered.
\end{proof}

\section{Acknowledgments}

The author would like to thank Peter Bra{\ss} and Jon Lenchner for introducing the problems to him and Ning Xu and Peter Bra{\ss} for helpful discussions.

\bibliographystyle{abbrv}	
\bibliography{ref}

\begin{thebibliography}{10}

\bibitem{agar}
P.~K. Agarwal and J.~Pach.
\newblock {\em Combinatorial Geometry}.
\newblock John Wiley \& Sons, 1995.

\bibitem{convexCOv}
H.~Aota, T.~Fukunaga, and H.~Nagamochi.
\newblock An approximation algorithm for locating maximal disks within convex
  polygons.
\newblock {\em Int. J. Comput. Geometry Appl}, 21(6):661--684, 2011.

\bibitem{AronovGeo}
B.~Aronov.
\newblock On the geodesic voronoi diagram of point sites in a simple polygon.
\newblock In {\em Proceedings of the third annual symposium on Computational
  geometry}, SCG '87, pages 39--49, New York, NY, USA, 1987.

\bibitem{aste}
T.~Aste and D.~Weaire.
\newblock {\em The Pursuit of Perfect Packing}.
\newblock CRC Press, 2008.

\bibitem{gridEmd}
T.~Biedl and G.~Kant.
\newblock A better heuristic for orthogonal graph drawings.
\newblock In {\em COMPUT. GEOM. THEORY APPL}, pages 24--35. Springer-Verlag,
  1998.

\bibitem{geodPac}
M.~Borgelt, M.~Kreveld, and J.~Luo.
\newblock Geodesic disks and clustering in a simple polygon.
\newblock In T.~Tokuyama, editor, {\em Algorithms and Computation}, volume 4835
  of {\em Lecture Notes in Computer Science}, pages 656--667. Springer Berlin
  Heidelberg, 2007.

\bibitem{boro}
K.~B\"{o}r\"{o}czk.
\newblock {\em Finite Packing and Covering}.
\newblock Cambridge University Press, 2004.

\bibitem{dispersion1980}
B.~Dasarathy and L.~J. White.
\newblock A maxmin location problem.
\newblock {\em Operations Research}, 28(6):pp. 1385--1401, 1980.

\bibitem{OPP}
E.~D. Demaine, J.~S.~B. Mitchell, and J.~O'Rourke.
\newblock The open problems project.
\newblock {\em http://cs.smith.edu/~orourke/TOPP/Welcome.html}.

\bibitem{drezner1995facility}
Z.~Drezner.
\newblock {\em Facility location: a survey of applications and methods}.
\newblock Springer series in operations research. Springer, 1995.

\bibitem{Eppstein1997}
D.~Eppstein.
\newblock Faster construction of planar two-centers.
\newblock In {\em Proceedings of the eighth annual ACM-SIAM symposium on
  Discrete algorithms}, SODA '97, pages 131--138. Society for Industrial and
  Applied Mathematics, 1997.

\bibitem{erd}
P.~Erd{\"o}s, P.~Gruber, and J.~Hammer.
\newblock {\em Lattice Points. Pitman Monographs and Surveys in Pure and
  Applied Mathematics 39}.
\newblock Longman Scientific \& Technical, Harlow, 1989.

\bibitem{Erkut1989275}
E.~Erkut and S.~Neuman.
\newblock Analytical models for locating undesirable facilities.
\newblock {\em European Journal of Operational Research}, 40(3):275 -- 291,
  1989.

\bibitem{2approxKc}
T.~Feder and D.~Greene.
\newblock Optimal algorithms for approximate clustering.
\newblock In {\em Proceedings of the twentieth annual ACM symposium on Theory
  of computing}, STOC '88, pages 434--444, New York, NY, USA, 1988. ACM.

\bibitem{packNard}
R.~J. Fowler, M.~Paterson, and S.~L. Tanimoto.
\newblock Optimal packing and covering in the plane are np-complete.
\newblock {\em Inf. Process. Lett.}, 12(3):133--137, 1981.

\bibitem{garey1977rectilinear}
M.~R. Garey and D.~S. Johnson.
\newblock The rectilinear steiner tree problem is np-complete.
\newblock {\em SIAM Journal on Applied Mathematics}, 32(4):826--834, 1977.

\bibitem{Garey:1979:CIG:578533}
M.~R. Garey and D.~S. Johnson.
\newblock {\em Computers and Intractability: A Guide to the Theory of
  NP-Completeness}.
\newblock W. H. Freeman \& Co., New York, NY, USA, 1979.

\bibitem{kcenterRef}
T.~Gonzalez.
\newblock {Clustering to minimize the maximum intercluster distance}.
\newblock {\em Theoretical Computer Science}, 38:293--306, 1985.

\bibitem{convex}
P.~Gritzman and J.~Wills.
\newblock Finite packing and covering, chapter 3.4.
\newblock In P.~Gruber and J.~Wills, editors, {\em Vol B of Handbook of Convex
  Geometry}. North Holland, 1993.

\bibitem{grub}
P.~M. Gruber and C.~G. Lekkerkerker.
\newblock {\em Geometry of Numbers}, volume~37.
\newblock North-Holland, 1987.

\bibitem{Guibas}
L.~Guibas, J.~Hershberger, D.~Leven, M.~Sharir, and R.~Tarjan.
\newblock Linear time algorithms for visibility and shortest path problems
  inside simple polygons.
\newblock In {\em Proceedings of the second annual symposium on Computational
  geometry}, SCG '86, pages 1--13, New York, NY, USA, 1986. ACM.

\bibitem{geoDiaSimp}
J.~Hershberger and S.~Suri.
\newblock Matrix searching with the shortest path metric.
\newblock In {\em SIAM J. Comput}, pages 485--494, 1993.

\bibitem{Hershberger97anoptimal}
J.~Hershberger and S.~Suri.
\newblock An optimal algorithm for euclidean shortest paths in the plane.
\newblock {\em SIAM J. Comput}, 28:2215--2256, 1997.

\bibitem{hosseini2009obnoxious}
S.~Hosseini and A.~M. Esfahani.
\newblock Obnoxious facility location.
\newblock In {\em Facility Location}, pages 315--345. Springer, 2009.

\bibitem{diskBoundShar}
K.~Kedem, R.~Livne, J.~Pach, and M.~Sharir.
\newblock On the union of jordan regions and collision-free translational
  motion amidst polygonal obstacles.
\newblock {\em Discrete {\&} Computational Geometry}, 1:59--70, 1986.

\bibitem{ibmRob}
J.~Lenchner.
\newblock Personal communication, 2012.

\bibitem{LenchnerICAC}
J.~Lenchner, C.~Isci, J.~O. Kephart, C.~Mansley, J.~Connell, and S.~McIntosh.
\newblock Towards data center self-diagnosis using a mobile robot.
\newblock In {\em Proceedings of the 8th ACM international conference on
  Autonomic computing}, ICAC '11, pages 81--90, New York, NY, USA, 2011. ACM.

\bibitem{higherGeoVorHoles}
C.-H. Liu and D.~T. Lee.
\newblock Higher-order geodesic voronoi diagrams in a polygonal domain with
  holes.
\newblock SODA, 2012.

\bibitem{5980554}
C.~Mansley, J.~Connell, C.~Isci, J.~Lenchner, J.~Kephart, S.~McIntosh, and
  M.~Schappert.
\newblock Robotic mapping and monitoring of data centers.
\newblock In {\em Robotics and Automation (ICRA), 2011 IEEE International
  Conference on}, pages 5905--5910, 2011.

\bibitem{Megiddo83applyingparallel}
N.~Megiddo.
\newblock Applying parallel computation algorithms in the design of serial
  algorithms.
\newblock {\em J. ACM}, 30:852--865, 1983.

\bibitem{penVig}
R.~Penninger and I.~Vigan.
\newblock Point set isolation using unit disks is np-complete.
\newblock {\em arXiv:1303.2779 [cs.CG]}, 2013.

\bibitem{rog}
C.~Rogers.
\newblock {\em Packing and Covering, Cambridge, 1964}.
\newblock Cambridge Univ. Press, 1964.

\bibitem{iacono}
J.~I. Sarah R.~Allen.
\newblock Packing identical simple polygons is np-hard.
\newblock {\em In abstract of the 22nd Fall Workshop on Computational Geometry,
  University of Maryland}, 2012.

\bibitem{zusch}
G.~Scheithauer.
\newblock {\em Zuschnitt- und Packungsoptimierung}.
\newblock Springer, 2008.

\bibitem{Shamos75}
M.~I. Shamos and D.~Hoey.
\newblock Closest-point problems.
\newblock In {\em Proceedings 16th IEEE Symposium on Foundations of Computer
  Science}, pages 151--162, 1975.

\bibitem{Sharir:1996:NAP:237218.237251}
M.~Sharir.
\newblock A near-linear algorithm for the planar 2-center problem.
\newblock In {\em Proceedings of the twelfth annual symposium on Computational
  geometry}, SCG '96, pages 106--112, New York, NY, USA, 1996.

\bibitem{Shin98computingtwo}
C.-S. Shin, J.~hyun Kim, S.~K. Kim, and K.~yong Chwa.
\newblock Two-center problems for a convex polygon.
\newblock In {\em proceedings of ESA'98}, pages 199--210, 1998.

\bibitem{toth1993packing}
G.~T{\'o}th and W.~Kuperberg.
\newblock Packing and covering with convex sets, chapter 3.3.
\newblock In P.~Gruber and J.~Wills, editors, {\em Vol B of Handbook of Convex
  Geometry}. North Holland, 1993.

\bibitem{recentT}
G.~T\'{o}th and W.~Kuperberg.
\newblock A survey of recent results in the theory of packing and covering.
\newblock In J.~Pach, editor, {\em New Trends in Discrete and Computational
  Geometry}, volume~10 of {\em Algorithms and Combinatorics}, pages 251--279.
  Springer, 1993.

\bibitem{toth}
L.~T{\'o}th.
\newblock {\em Lagerungen in der {E}bene auf der {K}ugel und im {R}aum}.
\newblock Grundlehren der mathematischen {W}issenschaften in
  {E}inzeldarstellungen mit besonderer {B}er{\"u}cksichtigung der
  {A}nwendungsgebiete. Springer, 1972.

\bibitem{Toussaint89computinggeodesic}
G.~T. Toussaint.
\newblock Computing geodesic properties inside a simple polygon.
\newblock {\em Rev. Intell. Artific.}, 1989.

\bibitem{Vazirani}
V.~V. Vazirani.
\newblock {\em Approximation algorithms}.
\newblock Springer-Verlag New York, Inc., New York, NY, USA, 2001.

\bibitem{ningPack}
N.~Xu.
\newblock Packing identical simply polygons of constant size is np-hard.
\newblock {\em EuroCG13}, 2013.

\bibitem{zong}
C.~Zong.
\newblock {\em Sphere Packings}.
\newblock Springer, 1999.

\end{thebibliography}

\end{document}